\newcommand{\eps}{\varepsilon}
\newtheorem{theorem}{Theorem}
\title{Probability distribution reconstruction using circuit cutting applied to a variational classifier}
\author{Niels M. P. Neumann \and Carlos M. R. Rocha \and Jasper Verbree \and Marc van Vliet\footnote{Authors ordered alphabetically}}
\date{Department of Applied Cryptography and Quantum Applications, \\ The Netherlands Organisation for Applied Scientific Research, \\ 2595DA The Hague, The Netherlands}
\begin{document}

\maketitle

\begin{abstract}
Significant efforts are being spent on building a quantum computer. 
At the same time, developments in quantum software are rapidly progressing. 
Insufficient quantum resources often are the problem when running quantum algorithms. 
New techniques can aid in using smaller quantum computers to run larger quantum algorithms. 
One of these techniques is circuit cutting. 
With this method, a circuit is broken into multiple pieces, each of which is run on quantum hardware independently and then recombined to obtain the overall answer. 
These circuit cutting techniques require additional circuit evaluations, which can form a bottleneck for algorithms requiring many interactions. 
This work explores the potential of circuit cutting techniques precisely in this regime of many interactions. 
We consider two different models, a standard method based on expectation values, and a novel method based on probability distribution reconstruction. 
Next, we compare two different training methods we call \textit{train-then-cut} and \textit{cut-then-train} and show that in practice, cut-then-train still produces good results. 
This observation brings closer the practical applicability of circuit cutting techniques, as a \textit{train-then-cut} strategy is often infeasible. 
We conclude by implementing a cut and uncut circuit and find that circuit cutting helps achieve higher fidelity results. 
\end{abstract}

\section{Introduction}
Near-term quantum devices suffer from a limited number of good qubits. 
Practical usage of near-term quantum devices requires ways to utilize the available qubits optimally. 
Different techniques exist to optimally utilize the quantum resources, such as error-correcting codes~\cite{Shor_error:1995,McClean2020}, error-mitigation techniques~\cite{Geller2021,Bravyi:2021} and improved compiling of quantum circuits~\cite{Hashim:2021}.

In 2020, \citeauthor{Peng:2020} proposed a novel technique to deal with quantum devices with a limited number of qubits by showing how large quantum circuits can be approximated using small quantum computers~\cite{Peng:2020}. 
Their method identifies clusters of at most $d$ qubits, where for each cluster, at most $k$ qubits interact with another cluster. 
The overhead of this method is exponential in~$k$. 

Since then, multiple improvements and variants of the original algorithm have been proposed. 
Multiple works improved the classical overhead, for instance, by using randomized intermediate measurements~\cite{Lowe:2023,Harada:2024}, by allowing classical communication~\cite{Piveateau:2024}, by reducing the number of CNOT gates used~\cite{Ufrecht:2023}, or by only considering states of high amplitude~\cite{Vedran:2023}. 
Other works gave efficient cutting protocols for specific quantum operations, such as two-qubit unitaries~\cite{Mitarai:2021,Mitarai2021overheadsimulating,Ufrecht:2024}.
\citeauthor{Schmitt:2025} showed that cutting gates simultaneously is advantageous over cutting them separately~\cite{Schmitt:2025}. 

Other works implemented circuit cutting methods to run variational quantum eigensolvers~\cite{Fujii:2022} or the quantum approximate optimization algorithm (QAOA)~\cite{Bechtold:2023,Meulen:2025}.
Additionally, circuit cutting methods can help obfuscate the quantum operations~\cite{Typaldos:2024}.

Circuit cutting methods help to decompose a large quantum circuit into smaller ones. 
These smaller quantum circuits can have smaller depth or width. 
With shorter depth or smaller width, the circuits tend to be less susceptible to noise, resulting in higher-fidelity results. 
Various implementations show that applying circuit cutting improves the overall fidelity of hardware implementations~\cite{Ufrecht:2023,Ying:2023,Singh:2024}.

With most circuit cutting results, the subcircuits are recombined to approximate the expectation value of the original circuit. 
Hoeffding's inequality promises that these methods succeed in this task with high probability~\cite{Hoeffding:1963}.
However, some algorithms, such as QAOA, variational algorithms, and Grover's algorithm with multiple targets~\cite{Grover:1996}, require an output state with large amplitude, instead of the expectation value of the circuit. 
Traditional circuit cutting techniques are thus insufficient for such applications. 

We extend this area of research in two ways. 
First, we propose a method to reconstruct the probability distribution of a quantum circuit, overcomming the limitation of traditional circuit cutting techniques. 

Second, we introduce and compare two training strategies, called \textit{fit\_then\_cut} and \textit{cut\_then\_fit}.
This extension addresses the training part of variational algorithms. 
Training a variational algorithm in a circuit cutting setting can be extremely resource intensive. 
We show, by means of simulation, and hardware implementations, that circuit cutting has potential for variational algorithms, but that the training cost can be a limiting factor for large scale problems. 

The second extension considers a quantum variational classifier~\cite{Benedetti:2019,TNOVCBLOG:2023,TNOVCSOFTWARE:2025} with circuit cutting applied to it.
We will introduce circuit cuts in the training and/or validation phase of the algorithm to test its performance. 
As an added benefit, our work compares the \textit{modulo} model and the \textit{parity} model, showing that the label encoding affects the overall performance, a feature already noticed in earlier work~\cite{Meyer:2023,TNOVCBLOG:2023}. 

The remainder of this work is structured as follows: 
\cref{sec:background} gives background information on circuit cutting.
This section also introduces our first extension: the reconstruction of the probability distribution using circuit cutting methods. 
Next, \cref{sec:VC} describes the use case considered, the corresponding quantum circuit, and the framework used to run the experiments. 
\cref{sec:Results} shows the results of the executed experiments.
These results include simulated and hardware results and a comparison between different label encodings and different training methods. 
This work concludes in \cref{sec:Discussion} with a discussion of the results and directions for future work.

\section{Circuit cutting}\label{sec:background}
This section provides background information on circuit cutting, as well as our proposed method to reconstruct the probability distribution. 

\subsection{Background on circuit cutting}
The original approach to circuit cutting uses an $n$-qubit device and classical computations to mimic an $n+k$-qubit algorithm. 
Subsequent works showed how to subdivide a quantum circuit in subcircuits, even if $k$ is in the order of $n$.
Circuit cutting techniques can apply gate cutting and wire cutting. 
Gate cutting replaces a multi-qubit gate by single-qubit gates and measurements, whereas wire cutting measures a qubit and reinitializes it. 
\cref{fig:circuit_cutting_example} graphically shows the idea behind circuit cutting. 
\begin{figure}
    \centering
    \includegraphics[width=\linewidth]{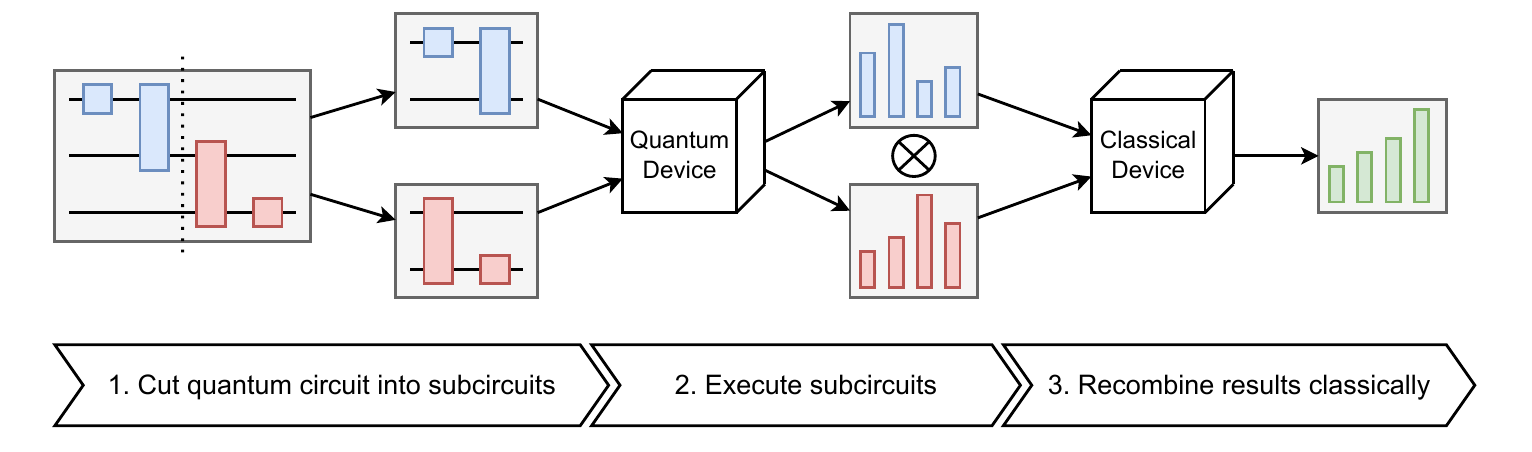}
    \caption{A graphical explanation of circuit cutting. 
    In this example, the second qubit is cut and the resulting circuit consists of two smaller quantum circuits.
    Figure used from~\cite{Meulen:2025}.}
    \label{fig:circuit_cutting_example}
\end{figure}

These gate replacements and qubit measurements can result in the circuit being decomposable into independent blocks.
Each block tends to use fewer gates and/or qubits than the original circuit. 
Running the independent blocks on quantum hardware thus tends to have higher fidelity than running the overall quantum circuit.
Different implementations confirmed this intuition~\cite{Ufrecht:2023,Ying:2023,Singh:2024}. 

In this work, we use the circuit cutting techniques introduced by \citeauthor{Mitarai:2021}~\cite{Mitarai:2021}. 
They showed how to decompose gates of the form $\exp(i \theta A_1\otimes A_2)$ for operators $A_1$ and $A_2$, such that $A_1^2 = I = A_2^2$.
Their decomposition consists of single-qubit and measurements with post-selection~\cite[Lemma~6]{Mitarai:2021}. 

They give an explicit decomposition for the two-qubit CZ gate into single qubit gates and measurements~\cite[Figure~2]{Mitarai:2021}.
We have to run six circuits, which results in ten cases after post-selection of the measurement results. 
This decomposition of the CZ gate gives the tools to decompose any quantum circuit in smaller subcircuits. 

This work tests the feasibility of naive circuit cutting, which means that we will cut individual gates. 
Results from literature state that cutting multiple two-qubit gates simultaneously can be advantageous~\cite{Schmitt:2025}, though these techniques are complex to implement and to some extent obfuscate the impact of individual circuit cuts. 
For the sake of clarity and to keep the models simple, we omit these improvements from our implementations. 
As a result, the results presented in this work can probably be further improved by incorporating these improvements.

\subsection{Probability distribution reconstruction}
\citeauthor{Mitarai:2021} showed that the subcircuits produce the same expectation value as the original circuit.
Following Hoeffding's inequality, approximating the expectation value of a random variable with probability at least $1-\delta$ to within error $\eps$ requires $N=\mathcal{O}(-\log(\delta/2)/\eps^{2})$ samples.

As mentioned in the introduction, approximating the expectation value of a circuit does not always suffice. 
Some quantum algorithms instead require the actual probability distribution or ask for a state with high probability of being found. 
Examples include classifiers with more than two classes and Grover's algorithm with multiple labeled items. 
Oftentimes, the required information cannot be obtained simply from the expectation value. 
Standard circuit cutting techniques might fail to provide usable outcomes in these scenarios. 
These standard circuit cutting techniques can also not directly help in reconstructing the probability distribution of the entire circuit, in part due to the measurements with post-selection. 

In this section, we show that we can reconstruct the original probability distribution by slightly modifying the algorithm by \citeauthor{Mitarai:2021}. 
We claim that we can reconstruct the probability corresponding to \textit{any} of the $2^n$ states to within error $\eps$. 
Following the method, we have at most ten quantum circuits that we have to run. 
We state and prove the theorem for a single circuit cut, though the method directly extends to an arbitrary number of circuit cuts. 
\begin{theorem}
    Let $U$ be an $n$-qubit quantum circuit.
    For every $x\in\{0,1\}^n$ let $p(x) = |\bra{x}U\ket{0}|^2$.
    Let $q_i$ for $i\in\{0,9\}$ be the empirical probability distribution for each of the ten subcircuits, obtained by sampling the subcircuits $N$ times. 
    Let $q(x)$ be the weighted sum of the $q_i$'s for outcome $x\in\{0,1\}^n$. 
    Then, for $N=\mathcal{O}(-\log(\delta/2)/\eps^{2})$, we have that $|p(x) - q(x)|\le \eps$ with probability $1-\delta$ for all $x\in\{0,1\}^n$. 
\end{theorem}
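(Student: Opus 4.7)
The plan is to reduce the probability-distribution claim to an expectation-value statement and then apply Hoeffding's inequality independently to each subcircuit. First I would observe that for any fixed $x$, the probability $p(x) = |\bra{x}U\ket{0}|^2 = \mathrm{Tr}\bigl(|x\rangle\!\langle x|\, U|0\rangle\!\langle 0|U^\dagger\bigr)$ is just the expectation value of the rank-one projector $O_x = |x\rangle\!\langle x|$. Since Mitarai et al.'s quasi-probability decomposition of the cut CZ gate rewrites the expectation of every Hermitian observable as a signed combination $\sum_{i=0}^{9} c_i\,\mathrm{Tr}(O \rho_i)$, where $\rho_i$ is the post-selected output state of the $i$-th subcircuit, applying it with $O = O_x$ immediately yields $p(x) = \sum_{i=0}^{9} c_i\, q_i(x)$, with $q_i(x) = \mathrm{Tr}(|x\rangle\!\langle x|\,\rho_i)$ the probability of observing $x$ in the $i$-th subcircuit. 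The $c_i$ depend only on the decomposition of the cut gate and not on $n$; let $K := \sum_i |c_i|$ denote their absolute sum, a small constant.

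Next, for each subcircuit $i$ I would estimate $q_i(x)$ by the empirical frequency $\hat q_i(x)$ from $N$ shots, and set $q(x) := \sum_i c_i \hat q_i(x)$. Since the indicator of outcome $x$ is a $\{0,1\}$-valued random variable, Hoeffding's inequality gives $\Pr\bigl[|\hat q_i(x) - q_i(x)| > t\bigr] \le 2\, e^{-2Nt^2}$ for each $i$. The triangle inequality then yields
\begin{equation*}
|p(x) - q(x)| \;\le\; \sum_i |c_i|\,|\hat q_i(x) - q_i(x)| \;\le\; K \cdot \max_i |\hat q_i(x) - q_i(x)|.
\end{equation*}
Union-bounding over the ten subcircuits and setting $t = \eps/K$ gives $\Pr[|p(x)-q(x)| > \eps] \le 20\, e^{-2N\eps^2/K^2}$; forcing this to be at most $\delta$ yields $N \ge \tfrac{K^2}{2\eps^2}\log(20/\delta) = \mathcal{O}(-\log(\delta/2)/\eps^2)$, with the implicit constant absorbing $K^2$ and the factor $20$ from the union bound.

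I expect the main technical obstacle to lie in the first step: namely, verifying that Mitarai et al.'s decomposition, originally stated for expectation values of Hermitian observables, carries over cleanly to the signed reconstruction of individual probabilities $p(x)$. This reduces to applying their decomposition with $O = |x\rangle\!\langle x|$ and being careful with the post-selection bookkeeping, so that each $\hat q_i(x)$ is genuinely a Bernoulli mean over the shots of subcircuit $i$ conditioned on its prescribed post-selection pattern. A secondary subtlety is whether the stated bound is meant to hold pointwise in $x$ or simultaneously for all $2^n$ outcomes; the uniform reading would require an extra union bound over $x$, injecting a linear factor of $n$ into $N$ that would have to be absorbed into the implicit constants of the $\mathcal{O}(\cdot)$ notation. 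Finally, the extension to $c$ simultaneous cuts is purely bookkeeping: the decomposition tensors into $10^c$ subcircuits with $K$ replaced by $K^c$, giving a sample complexity exponential in $c$ but still matching the stated form in $\eps$ and $\delta$.
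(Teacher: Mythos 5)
Your proposal follows essentially the same route as the paper's proof: write $p(x)$ as a weighted (signed) combination of the ten subcircuits' outcome probabilities via the Mitarai--Fujii decomposition (the paper's $\alpha_i$, your $c_i$), apply Hoeffding's inequality to each empirical frequency $q_i(x)$, and union-bound over the ten subcircuits to get the pointwise $\eps$-closeness with the stated sample complexity. If anything, your write-up is tighter than the paper's: you state Hoeffding with the correct exponent $2Nt^2$ (the paper writes $2\eps N$), track the weight sum $K$ explicitly when converting per-subcircuit error into total error, and flag the pointwise-versus-uniform-in-$x$ distinction that the paper's proof (which likewise fixes a single $x$) leaves unaddressed.
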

\begin{proof}
    Fix $x\in\{0,1\}^n$. 
    Let $q_i(x)$ be the empirical probability of obtaining $x$ from subcircuit $i$ based on $N$ samples.

    Following Hoeffding's inequality, we know that 
    \begin{equation*}
        \mathrm{Pr}[|q_i(x) - \mathbb{E}(q_i(x))| < \eps] \le 2\exp(-2\eps N).
    \end{equation*}
    We know that $p_i(x) = \sum_{i=0}^{9} \alpha_i \mathbb{E}(q_i(x))$ for $\alpha_i$ depending on the decomposed operation. 
    For the CZ gate, $\alpha_i=\dfrac{1}{2}$ for all $i$. 

    Repeated application of Hoeffding's inequality gives an upper bound on the error of each $q_i(x)$ to within error $\mathcal{O}(\eps)$. 
    By the union bound, we can then approximate $p_i(x)$ to within error $\eps$, proving the theorem. 
\end{proof}
Note that the number of required samples depends on the number of applied cuts. 
Recent work showed that this dependency is submultiplicative~\cite{Schmitt:2025}. 
For our circuit cutting methods we only have to run six subcircuits and then apply post-selection on midcircuit measurements to accommodate for the other four subcircuits. 

Typically, as $n$ grows, $p(x)$ becomes exponentially small, requiring exponentially small $\eps$, and thus exponentially many samples, to reasonably approximate $p(x)$. 
Therefore, reconstructing the entire probability distribution quickly becomes infeasible. 
However, the algorithms we consider produce a probability distribution with a constant number of states of interest with high amplitude (and hence large associated probability).
Our proposed method finds these states with high probability. 
This approach is further motivated by~\cite{Spielman:2009}. 
That work shows that algorithms often work well in practice, even though they have an exponential worst-case running time.

In the remainder of this work, we explore how Hoeffding's inequality and the union bound behave in practice, as both give a lower bound on the success probability. 

\section{Quantum Variational classifier} \label{sec:VC}
This section discusses the quantum variational classifier used to test the feasibility of probability distribution reconstruction in circuit cutting settings. 
We used a locally modified version of the open-source \texttt{TNO Quantum Variational Classifier} package~\cite{TNOVCSOFTWARE:2025}, hereafter referred to as \texttt{VC} and described in \cref{sec:original_VC}.
Next, \cref{sec:DQ_VC} gives the specific adaptations made for this work and the corresponding framework. 
All experiments use the standard Iris dataset, as implemented in the \texttt{tno.quantum.ml.datasets} package~\cite{TNODATASETS:2025}.

\subsection{TNO Quantum Variational Classifier} \label{sec:original_VC}
The original \texttt{VC} package comprises a collection of \texttt{PennyLane}-based modules that implement quantum machine learning (QML) models with different strategies for classical post-processing of measurements. 
\cref{fig:vc_circuits} shows the used quantum circuits, including the data encoding layer and the trainable layers. 
The measured qubits encode the class labels. 
Three different variants of this circuit are used: The \texttt{expected\_value\_model}, the \texttt{modulo\_model}, and the \texttt{parity\_model}.
Their choice may influence the number of classes the classifier can accommodate, the way features are mapped to classes, and the achievable classification accuracy; for further details, see Refs.~\cite{TNOVCSOFTWARE:2025}~and~\cite{TNOVCBLOG:2023}.
\begin{figure}
    \centering
    \includegraphics[width=\linewidth]{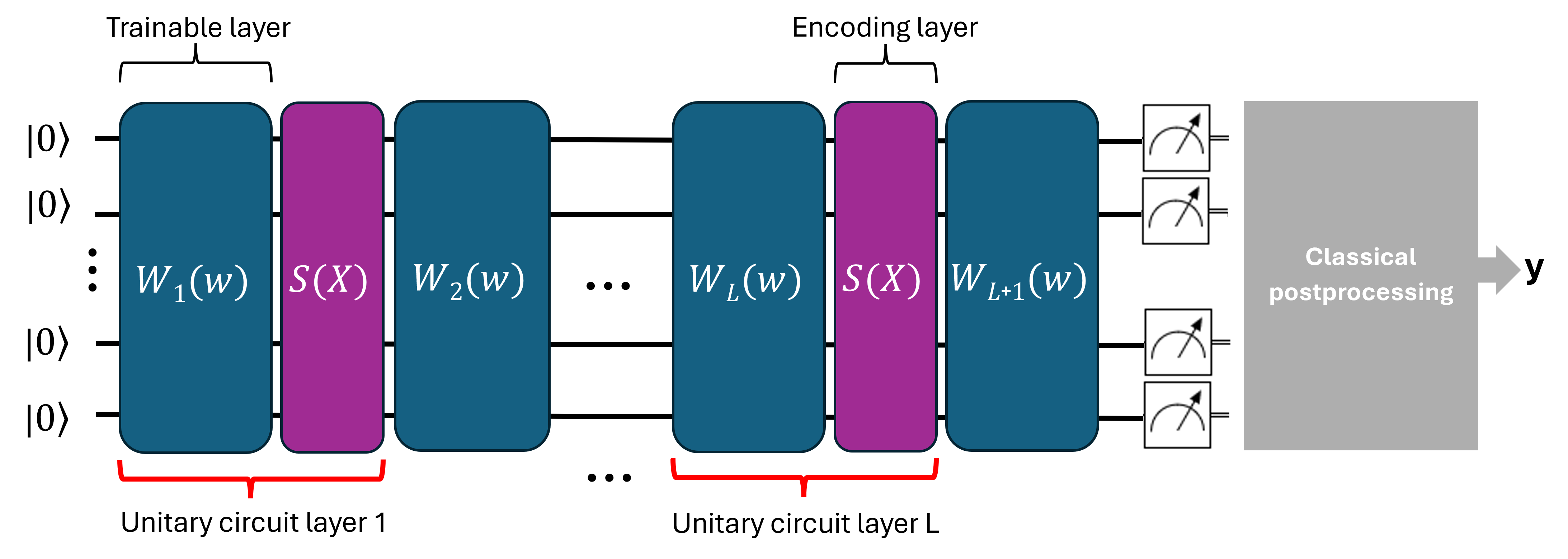}
    \caption{Overview of the QML models implemented in the \texttt{VC} package.
    The models use angle encoding and either expectation values (\texttt{expected\_value\_model}) or probability amplitudes (\texttt{modulo\_model} and \texttt{parity\_model}).}
    \label{fig:vc_circuits}
\end{figure}

The \texttt{expected\_value\_model} uses a mapping from qubit measurements to class labels by computing one Pauli-$Z$ expectation value per class/qubit.
The $n$ qubits of the circuit thus correspond to the number of classes in the dataset.
The measurements used in \cref{fig:vc_circuits} are in those cases replaced by expectation value estimators. 
A softmax function is then applied to the unnormalized expectation values tensor, and the predicted label is assigned to the class with maximum probability. 

The other two models use probability measurements for class assignment. 
The assignment is determined by the decimal representations of measured bitstrings.
In the \texttt{modulo\_model}, each $n$-bit measurement outcome $b$ is assigned a class according to $f(b) = \left[b\right]_{10}\bmod M$, where $M$ is the number of classes, and $[\cdot]_{10}$ is the decimal (base-$10$) representation of the argument.
This model is implemented by partitioning the probability vector into $M$ interleaved streams using strided slicing. 
The probability within each stream is summed to produce unnormalized class scores. 
These aggregated scores are subsequently passed through a softmax to obtain class probabilities.

Finally, \texttt{parity\_model} uses a bitstring-based nonlinear mapping~\cite{Meyer:2023}, where the least significant bits of each outcome index are combined with a parity function over the remaining bits; for the full implementation, see~\cite{Meyer:2023}. 
Note that, this approach enables richer decision boundaries than simple stride-based grouping (as in the \texttt{modulo\_model}) and proves especially useful for parity-based classification tasks or when using custom encodings~\cite{Meyer:2023}.

At the circuit level, the \texttt{VC} package implements quantum unitary transformations composed of alternating trainable mixing layers $W$ and encoding data layers $S$.
The expected values of the measured qubits or measured bitstrings are converted into output features.
\cref{fig:vc_circuit_cut} shows the unitary circuit for $L=1$.
The trainable mixing layers consist of alternating layers of $Z$, $Y$, and $Z$ rotation gates with trainable weights $\mathbf{w}$ (rotation angles) and entangling CNOT gates.
These mixing layers are consistent with \texttt{PennyLane}'s \texttt{StronglyEntanglingLayers}~\cite{Bergholm:2022}.
Dataset features $\mathbf{X}$ are mapped into quantum states using $R_{X}$ angle encoding as in \texttt{PennyLane}'s \texttt{AngleEmbedding}~\cite{Bergholm:2022}.
Note that the flexibility and expressibility of the unitary circuit can be conveniently tuned by performing $L$ repetitions of the $W_{i}(\mathbf{w}) \otimes S(\mathbf{X})$ block structure, each with their own trainable parameters~\cite{Hubregtsen:2021}.
\begin{figure}
    \centering
    \includegraphics[width=1\linewidth]{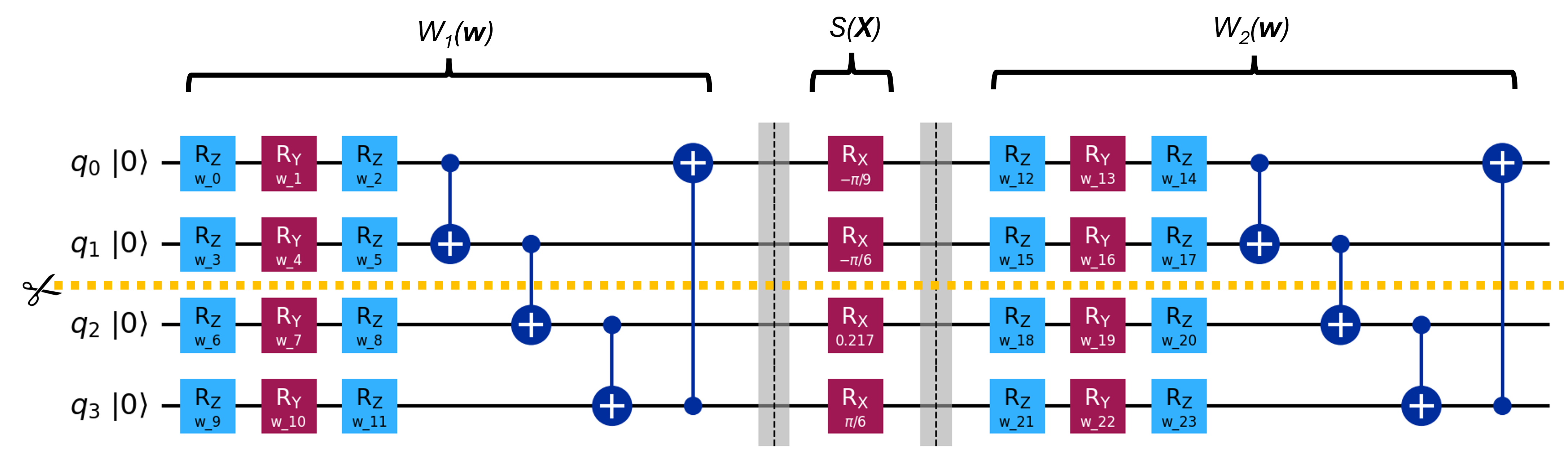}
    \caption{Structure of the used variational quantum circuits used in this work.
    Shown is the detailed implementation of one unitary circuit layer of \cref{fig:vc_circuits} for $L=1$.
    Trainable layers are implemented using \texttt{PennyLane}'s \texttt{StronglyEntanglingLayers}~\cite{Bergholm:2022} with a total of 24 trainable parameters ($\mathbf{w}$), while classical input features ($\mathbf{X}$) are encoded via \texttt{PennyLane}'s \texttt{AngleEmbedding}~\cite{Bergholm:2022}. 
    The illustrated circuit encodes the preprocessed feature vector $\mathbf{X}\!=\![5.9, 3.0, 4.2, 1.5]$ from the Iris dataset (class label $y\!=\!\rm{versicolor}$).
    The corresponding circuit cut strategy employed throughout is also depicted, showing 4 cut CNOT gates.}
    \label{fig:vc_circuit_cut}
\end{figure}

\subsection{\emph{DivideQuantum} Variational Classifier}\label{sec:DQ_VC}
This section describes the developed framework and the changes made to the original \texttt{PennyLane}-based \texttt{VC} package. 
We wish to use the functionalities offered by the \texttt{qiskit-addon-cutting} package~\cite{QISKITADDONCUTTING:2024}.
The following changes are made to be able to use these functionalities:
\begin{enumerate}
    \item{Implementation of dedicated classes that inherit functionalities from the original \texttt{VariationalClassifier}, \texttt{ProbabilityModel}, and \texttt{ExpectedValueModel}, adapted to accept circuit cut options and ensure full interoperability with \texttt{Qiskit Primitives}~\cite{QISKIT:2024} and the \texttt{qiskit-addon-cutting} package.}
    
    \item{Implementation of a \texttt{PennyLane}-to-\texttt{Qiskit} converter module that generates symbolic \texttt{Qiskit QuantumCircuit} objects with circuit \texttt{Parameter} instances, in a manner similar to the \texttt{PennyLane-Qiskit} plugin~\cite{Bergholm:2022}. Special attention is given to the conversion between big-endian qubit ordering in \texttt{PennyLane} and little-endian ordering in \texttt{Qiskit}.}
    
    \item{Implementation of a custom \texttt{torch.autograd.Function} to enable seamless integration between \texttt{Qiskit} and \texttt{PyTorch}~\cite{Jason:2024}, analogous to \texttt{TorchConnector} in \texttt{qiskit-machine-learning}~\cite{Sahin:2025}. Our implementation relies on the Parameter Shift Rule~\cite{Schuld:2019} to compute exact gradients during the backward pass.}
\end{enumerate}

\begin{figure}
    \centering
    \includegraphics[width=0.9\linewidth]{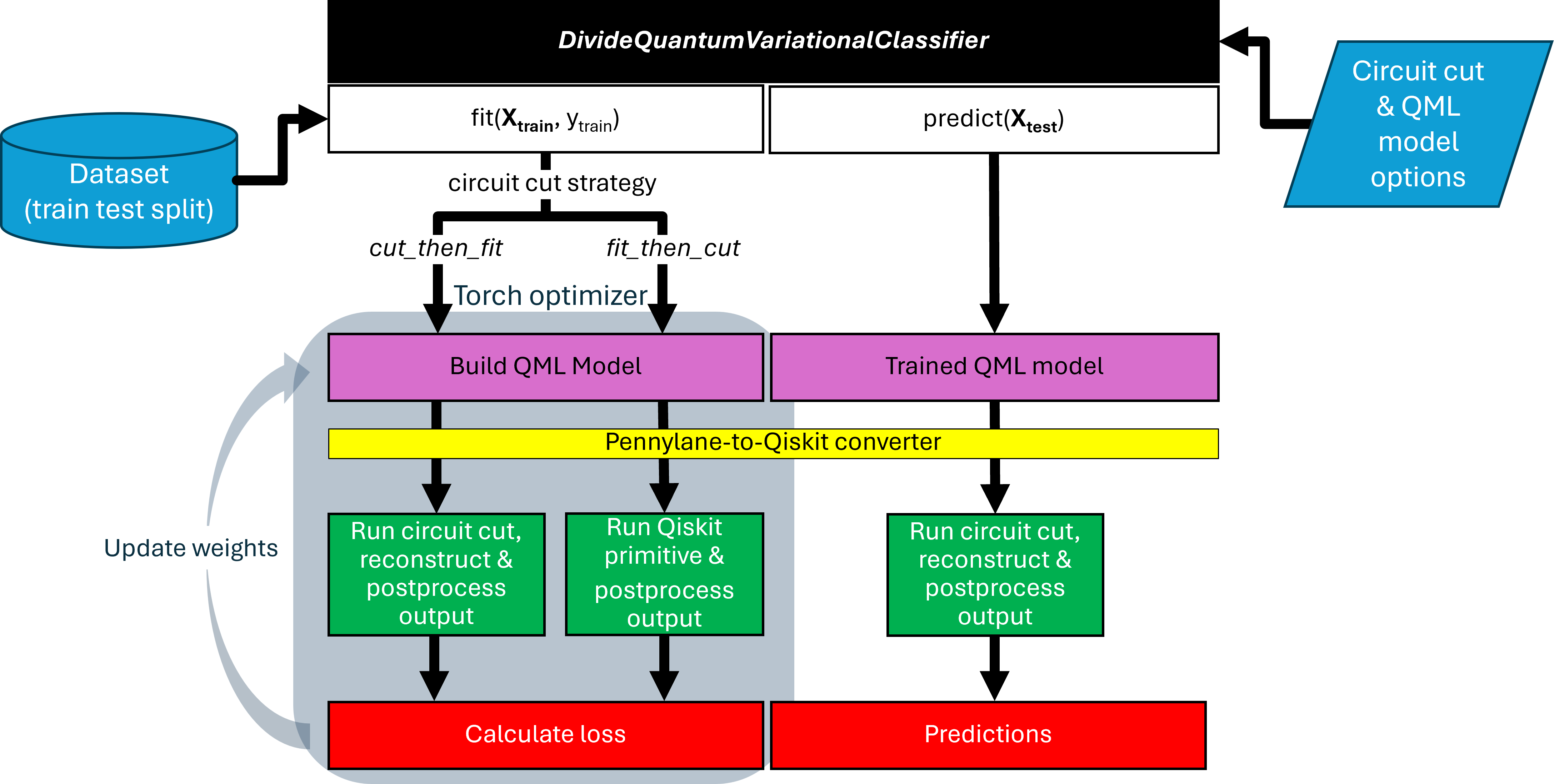}
    \caption{Main structure and design of the \texttt{DivideQuantumVariationalClassifier} implemented in this work. It extends the core functionality of the original \texttt{VC} package~\cite{TNOVCSOFTWARE:2025} to support circuit gate cuts and integrates seamlessly with the \texttt{qiskit-addon-cutting} package~\cite{QISKITADDONCUTTING:2024}. It features a dedicated \texttt{Pennylane}-to-\texttt{Qiskit} symbolic circuit converter developed in-house and is fully compatible with \texttt{PyTorch} tensors, optimizers, and loss functions via a specialized \texttt{torch.autograd.Function} connector.}
    \label{fig:dq_variational_classifier}
\end{figure}

\cref{fig:dq_variational_classifier} outlines the main structure and design of the final \texttt{DivideQuantumVariationalClassifier} implemented within our modified \texttt{VC} package, hereafter referred to as \texttt{DivideQuantum VC}, \texttt{DQ VC} for short. 
Similarly to the original \texttt{VariationalClassifier}, its implementation follows the scikit-learn estimator API~\cite{SCIKITLEARN:2025}, ensuring full compatibility with standard machine learning workflows via \texttt{fit} and \texttt{predict} methods.
Note the choice for two different \texttt{fit} methodologies. 
These two choices relate to the different circuit cut strategies combined with the training stage of the algorithm: 
\begin{itemize}
    \item{\texttt{fit\_then\_cut}: When this option is selected, QML models are trained as usual, without circuit cuts, using either \texttt{Qiskit Primitives} or the original \texttt{PennyLane QNode} \texttt{VC} workflow for circuit evaluations.
    The effects of gate cuts are only introduced during the validation phase, where the trained QML model is evaluated using \texttt{predict} with circuit cuts applied.}

    \item{\texttt{cut\_then\_fit}: With this option, circuit cuts and associated subexperiments are executed at every QML training iteration.
    Circuit outputs and losses are computed using the reconstructed expectation values or probability distributions, depending on the selected QML model. 
    Circuit cuts are also applied during validation, where the trained model is evaluated on the test set using \texttt{predict}.
    The \texttt{cut\_then\_fit} option is only compatible with the newly implemented \texttt{Qiskit} workflow. 
    While it fully incorporates the effects of circuit cuts, it is computationally very intensive. 
    For example, in a typical training iteration or backward pass using the simplest QML model shown in \cref{fig:vc_circuit_cut}, computing the full circuit gradient requires at most 62208 circuit evaluations.
    This corresponds to (2 angle shifts of $\pm~\pi/2$ per parameter) $\times$ (24 parameters) $\times$ ($6^4$ subcircuit executions involving 4 CNOT gate cuts). 
    In practice, we observed that after transpilation using \texttt{Qiskit}'s \texttt{generate\_preset\_pass\_manager} some variational parameters contributed negligibly to the gradient ($<1 \times 10^{-9}$). 
    To reduce the number of unnecessary computations, we therefore implemented a masking scheme within \texttt{torch.autograd.Function}. 
    This assigns Boolean masks (\texttt{True} or \texttt{False}) to each parameter to determine whether its gradient should be evaluated. 
    To account for potential changes in circuit structure or parameter relevance during optimization, all masks are reset every 10 optimization iterations, allowing previously masked parameters to re-enter the gradient computation process if their contribution becomes significant ($>1 \times 10^{-9}$).}
\end{itemize}

\cref{sec:Results} presents the results of the classification experiments for both strategies and the three different variants of the model. 
Recall that \cref{fig:vc_circuit_cut} shows the circuit used in the experiments for $L=1$ and that this model corresponds to the simplest model available within the \texttt{DivideQuantum VC} package with the lowest number of gate cuts (4 CNOT gates).

\section{Results} \label{sec:Results}
This section provides the results of the conducted circuit cutting experiments and the backends used. 
The remaining sections each provide the results for one of the circuit cutting experiments conducted. 
We conclude with the results of the hardware implementation and the comparison with the noisy simulation. 
\cref{tab:model_scores} gives a performance summary for all conducted experiments. 

\subsection{Experimental setup}
We conducted various experiments to test our circuit cutting implementations and to see how well our probability reconstruction approach works in practice. 

For all experiments, we used the standard Iris dataset~\cite{TNODATASETS:2025}. 
In the preprocessing phase, a train-test split with a 25\% test size was applied to the dataset, resulting in 112 training samples and 38 test samples. 
All models were trained using the \texttt{Pytorch Adagrad} optimizer with a learning rate of 0.1 and a weight decay of $1 \times 10^{-4}$.

Most results are obtained using quantum simulators, specifically the \texttt{Qiskit AerSimulator} and \texttt{PennyLane default.qubit} device, with a shot count of $2^{12}$. 
Additionally, for the \texttt{parity\_model} we also ran noisy simulations and implemented the circuits on quantum hardware. 
For these results, we used the 127 qubit \texttt{ibm\_strasbourg} quantum computer. 
The noisy simulation used a noisy model based on this same device. 
Note that when interfacing with \texttt{qiskit-addon-cutting}, the \texttt{num\_samples} option, i.e., the number of samples to draw from the quasi-probability distribution, was set to 2000; see~\cite{QISKITADDONCUTTING:2024} for more details.

The first experiment validates the correctness of circuit cutting and the developed framework by considering two different quantum circuits. 
All remaining experiments consider the quantum variational classifier outlined in the previous section. 
Specifically, we consider the three different models (\texttt{expected\_value\_model}, \texttt{modulo\_model}, and \texttt{parity\_model}) and for each of these three, we employ either the \texttt{fit\_then\_cut} strategy or the \texttt{cut\_then\_fit} strategy and compare the results with an uncut circuit. 
As quantum hardware typically is restricted in size, the latter strategy is of particular interest. 

For validation, we use two simple test cases, shown in \cref{fig:validation-circuits}. 
For the remaining experiments, we use the four-qubit quantum circuits involved in the evaluation and training of a Variational Classifier (VC), see \cref{sec:VC}. 
We restricted ourselves to the four-qubit case due to the computational intensity of simulating the large number of circuits that is required when using circuit cutting.
We give a detailed presentation of the result for the VC trained using the \texttt{expected\_value\_model} and then use a briefer format for the results for the VCs trained using the other two models, as they show a similar performance. 
\begin{figure}
  \centering
  \begin{subfigure}[t]{0.4\linewidth}
    \centering
      \begin{quantikz}
        \lstick{$\ket{0}$} & \gate{H} & \ctrl{1} & \ghost{U_1}\qw \\
        \lstick{$\ket{0}$} & \ghost{U_1}\qw & \targ{}  &\qw
      \end{quantikz}
      \caption{The two-qubit GHZ circuit.}
    \label{fig:GHZ circuit}
  \end{subfigure}
  \begin{subfigure}[t]{0.4\linewidth}
    \centering
      \begin{quantikz}
        \lstick{$\ket{0}$} & \gate{U_1} & \gate[2]{U_3} &\gate{U_4} &\qw \\
        \lstick{$\ket{0}$} & \gate{U_2} &               &\gate{U_5} &\qw
      \end{quantikz}
      \caption{A randomly generated circuit.}
    \label{fig:random-circuit}
  \end{subfigure}
  \caption{The circuits used for validating the circuit cutting implementations.
  For the random circuit, each gate $U_i$ represents a Haar-random unitary of appropriate size.}
  \label{fig:validation-circuits}
\end{figure}
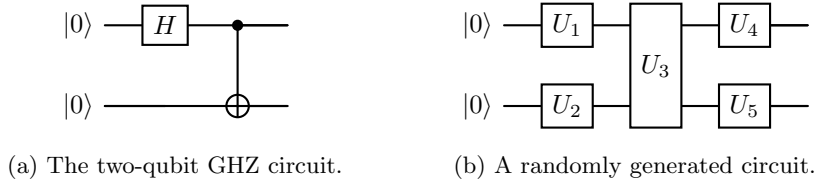

These experiments all use perfect simulations.
We continue with a noisy simulation of an uncut circuit and a cut circuit, both using the \texttt{parity\_model}.
We then conclude with a hardware implementation of both an uncut and a cut circuit, again using the \texttt{parity\_model}.  

\subsection{Validation}\label{sec:Validation}
We validated the performance of the circuit cutting methods using the two circuits shown in \cref{fig:validation-circuits}.
We compared the expectation value for different observables and computed the absolute error for the cut circuit with respect to the uncut circuit. 
Every run uses $2^{12}$ measurement rounds and we average the results over $100$ runs.

\cref{fig:validation-expectation-results} shows the absolute errors for the observables for both circuits. 
We see a small absolute error for each of the observables, which most likely originate from the statistical sampling nature of quantum algorithms and the low number of circuit evaluations. 
These effects get magnified by the subcircuit sampling introduced by the circuit cutting methods. 
The random circuit shows smaller errors than the circuit producing the GHZ state, which might be due to the broader distribution of the circuit. 
\begin{figure}
  \centering
  \begin{subfigure}[t]{0.49\linewidth}
    \centering
    \includegraphics[width=\linewidth]{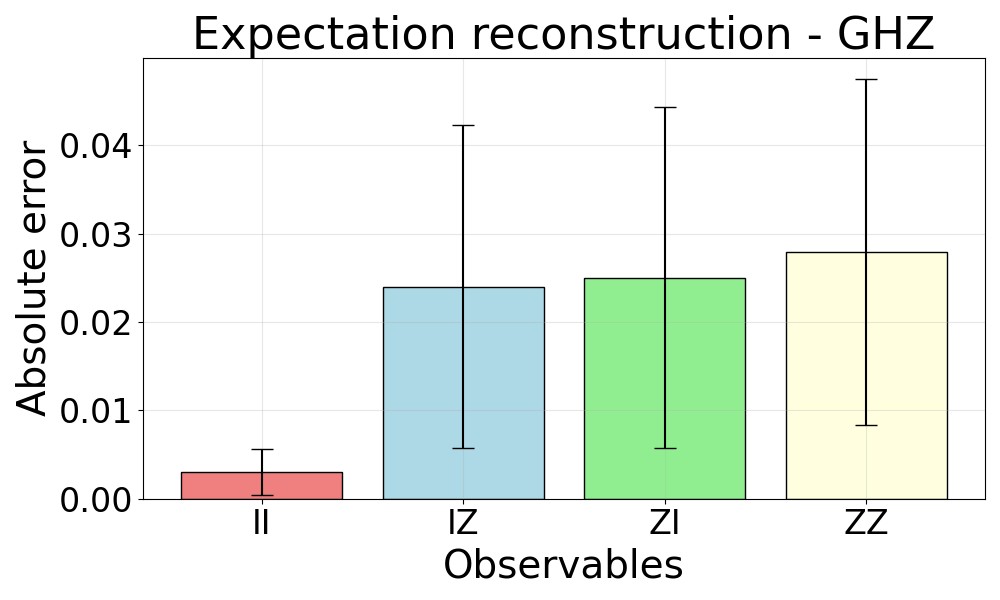}
    \caption{Results for the two-qubit GHZ circuit.}
    \label{fig:expectation-reconstruction-GHZ}
  \end{subfigure}
  \begin{subfigure}[t]{0.49\linewidth}
    \centering
    \includegraphics[width=\linewidth]{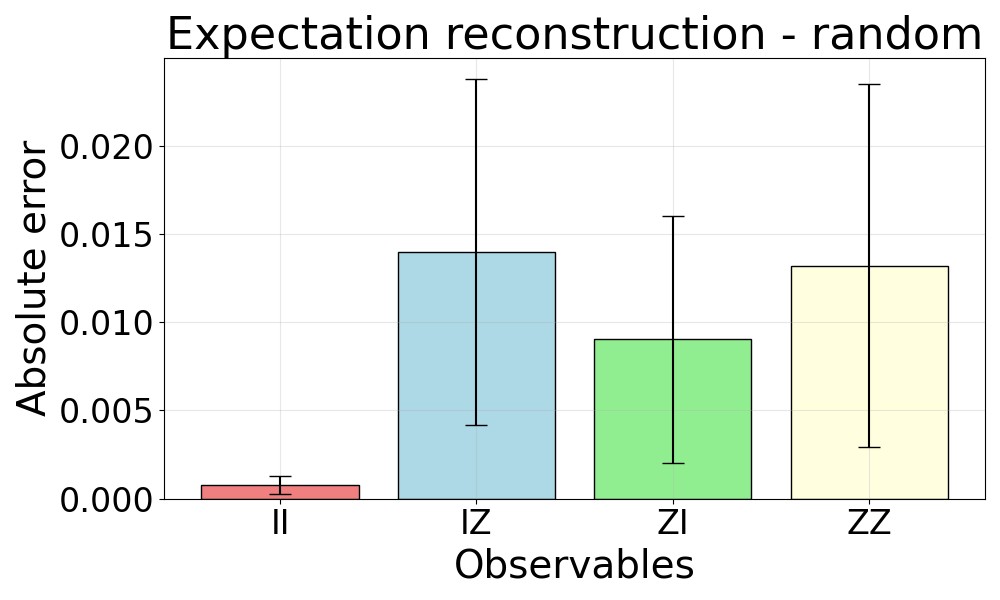}
    \caption{Results for the two-qubit random circuit.}
    \label{fig:expectation-reconstruction-random}
  \end{subfigure}
  \caption{The average and standard deviation in absolute error when using circuit cutting to reconstruct the expectation values of the depicted observables, and 100 runs.}
  \label{fig:validation-expectation-results}
\end{figure}

\cref{fig:validation-probability-results} shows the results the total reconstructed probability distribution averaged over $100$ runs for both circuits. 
The average total deviation for the GHZ circuit was $0.2\%$ and for the random circuit this was $0.8\%$, with standard deviations of $0.2\%$ and $0.3\%$ respectively. It is interesting to note that for these experiments the probability distribution reconstruction seems to be more robust than the expectation value reconstruction.
\begin{figure}
  \centering
  \begin{subfigure}[t]{0.49\linewidth}
    \centering
    \includegraphics[width=\linewidth]{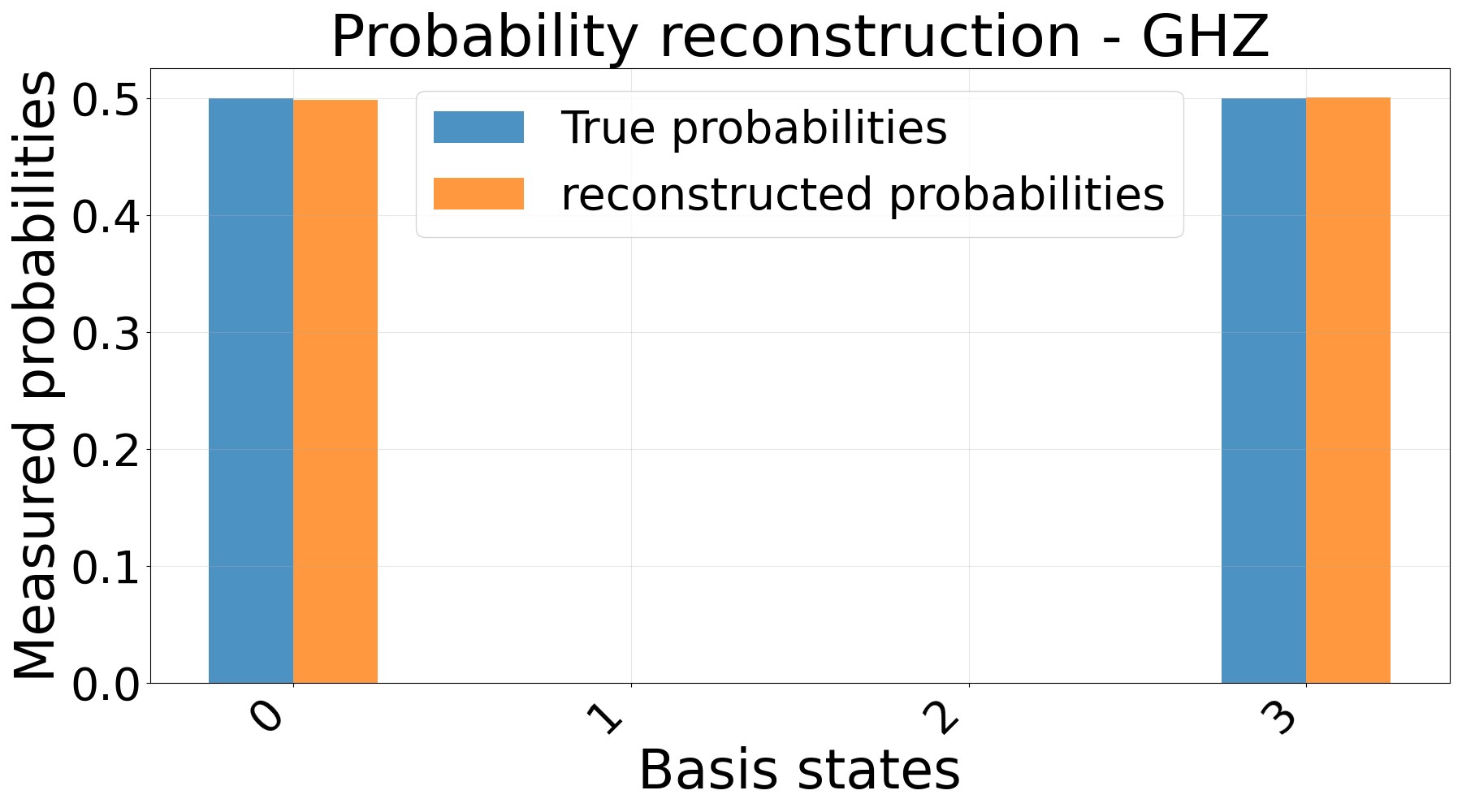}
    \caption{Results for the two-qubit GHZ circuit.}
    \label{fig:probability-reconstruction-GHZ}
  \end{subfigure}
  \begin{subfigure}[t]{0.49\linewidth}
    \centering
    \includegraphics[width=\linewidth]{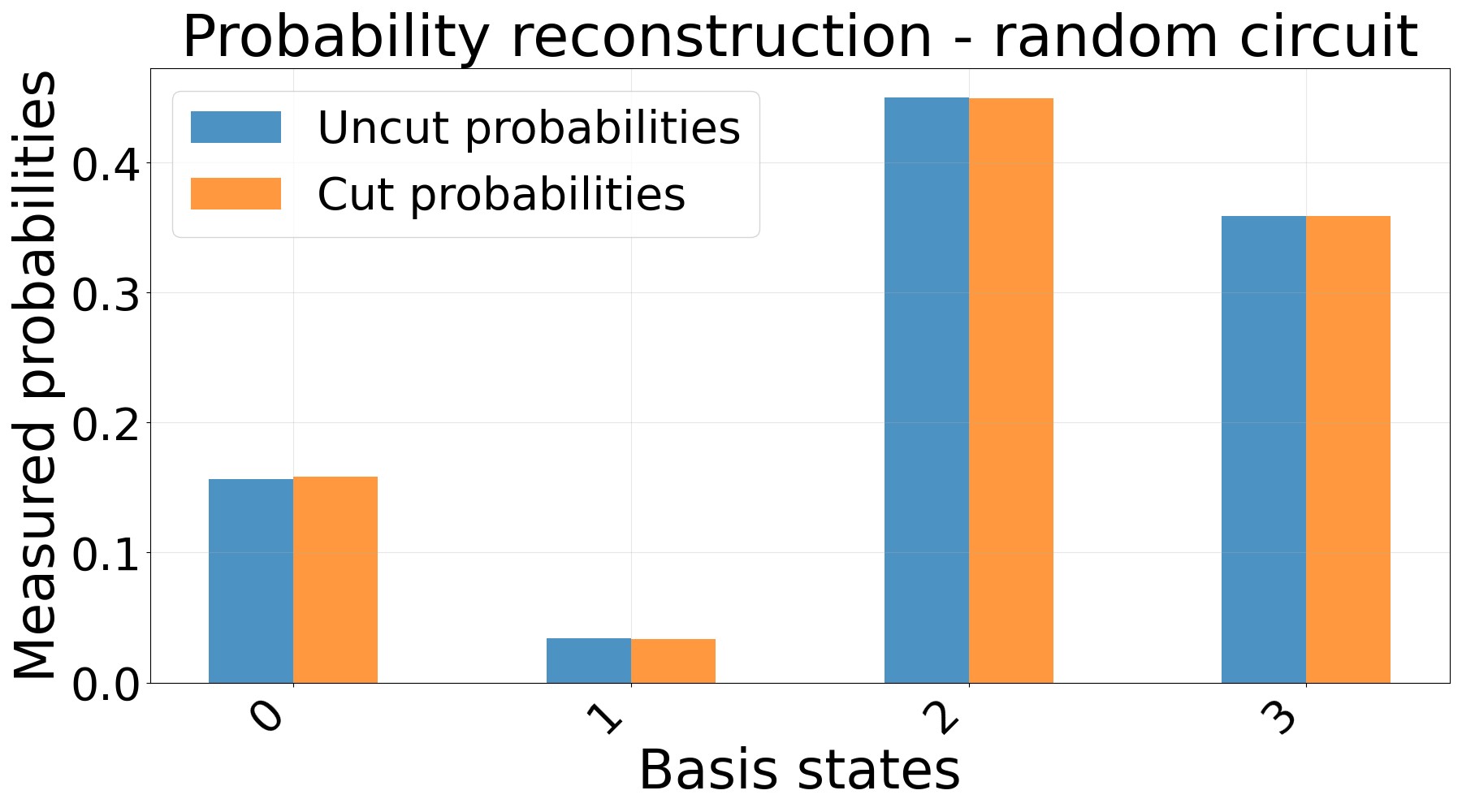}
    \caption{Results for the two-qubit random circuit.}
    \label{fig:probability-reconstruction-random}
  \end{subfigure}
  \caption{The deviation between the true probability distribution and the empirical probability distribution obtained when using circuit cutting.}
  \label{fig:validation-probability-results}
\end{figure} 

The next sections test the performance of the different models used for circuit cutting the variational classifier. 

\subsection{VC with the \texttt{expected\_value\_model}}\label{sec:expected_value}
This section shows the results obtained for the variational classifier that uses expectation values to assign scores to different classes.
For the first results, we have fitted the classifier without using circuit cutting and then tested the performance of the cutting implementations using the validation data sets.
For the second results, we considered a more practical setting, where also the training was performed using cut circuits. 
Due to the high simulation load, we have used a limited training phase for the \texttt{cut\_then\_fit} strategy. 
Still, the results show great alignment with the uncut results. 

\subsubsection{Fit then cut}\label{sec:expected_value_fit_and_cut}
Here we show the performance of \texttt{fit\_then\_cut} strategy (\cref{sec:DQ_VC}) using circuits generated by the \texttt{expected\_value\_model} classifier. 
For this we used a validation data set of 38 elements.
\cref{fig:expectation-reconstruction-VC} shows the expectation value produced by these circuits, and the observed deviation between the values produced with and without cutting.
The three observables show the measured expectation values for the different classes, the accumulated error shows the error for a single randomly chosen instance, and the average error shows the average accumulated error over all 38 elements.
\begin{figure}
    \centering
    \includegraphics[width=0.5\linewidth]{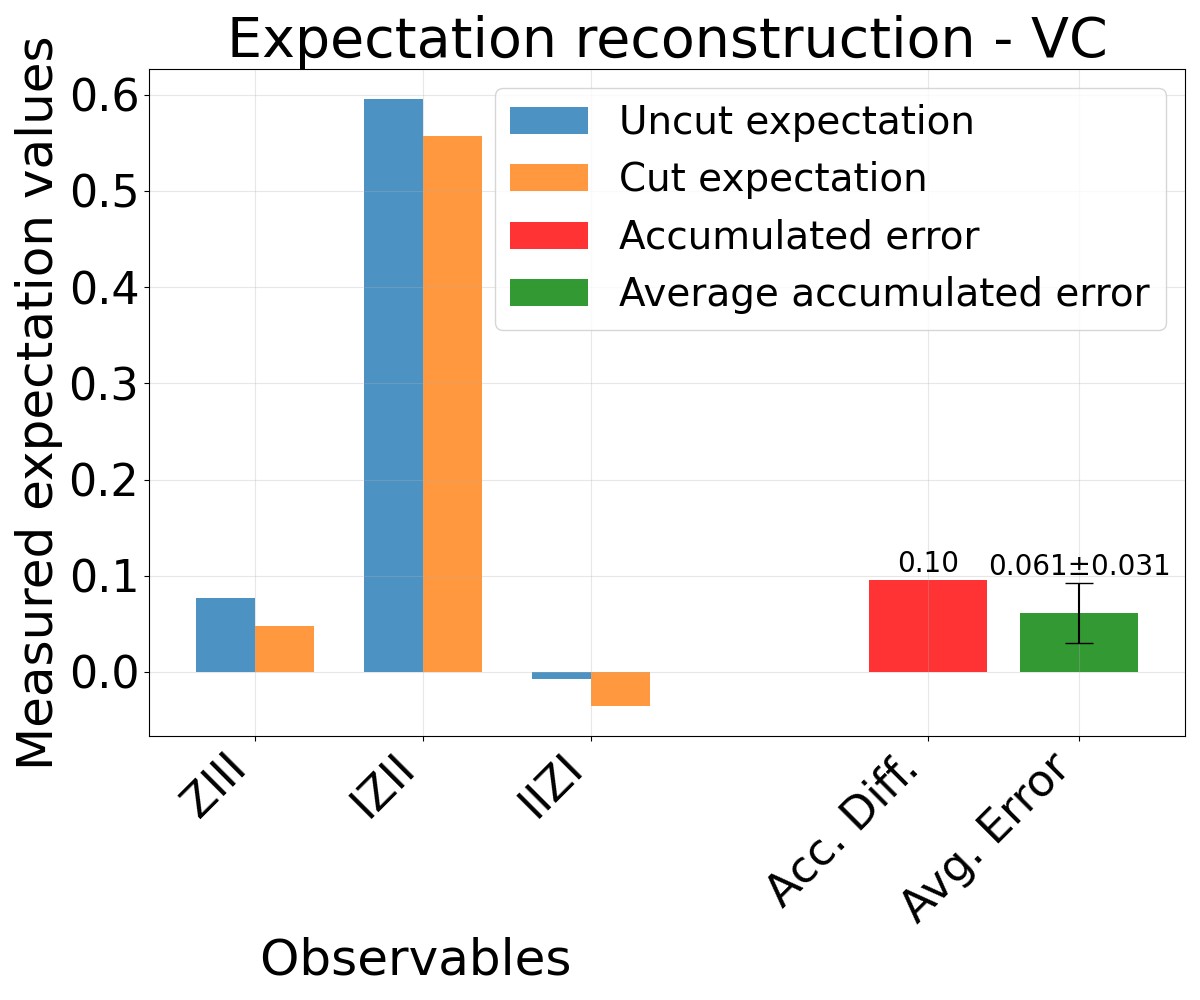}
    \caption{The deviation in expectation values for circuits generated by the \texttt{expected\_value\_model} classifier.
    On the left we see a comparison between the produced expectation values with and without circuit cutting for a single circuit. The accumulated error shown is the total absolute deviation for this single circuit. The average accumulated error shown is the aggregate of similar results for the whole test set of 38 circuits.}
    \label{fig:expectation-reconstruction-VC}
\end{figure}

For the \texttt{expected\_value\_model} classifier, the expectation values of the depicted observables directly translate to the label assigned to a sample.
\cref{fig:class-reconstruction-EVVC_all} shows the aggregated confusion matrix of the predicted and true labels of $100$ different train-test sets. 
%
\begin{figure}
    \centering
    \includegraphics[width=\linewidth]{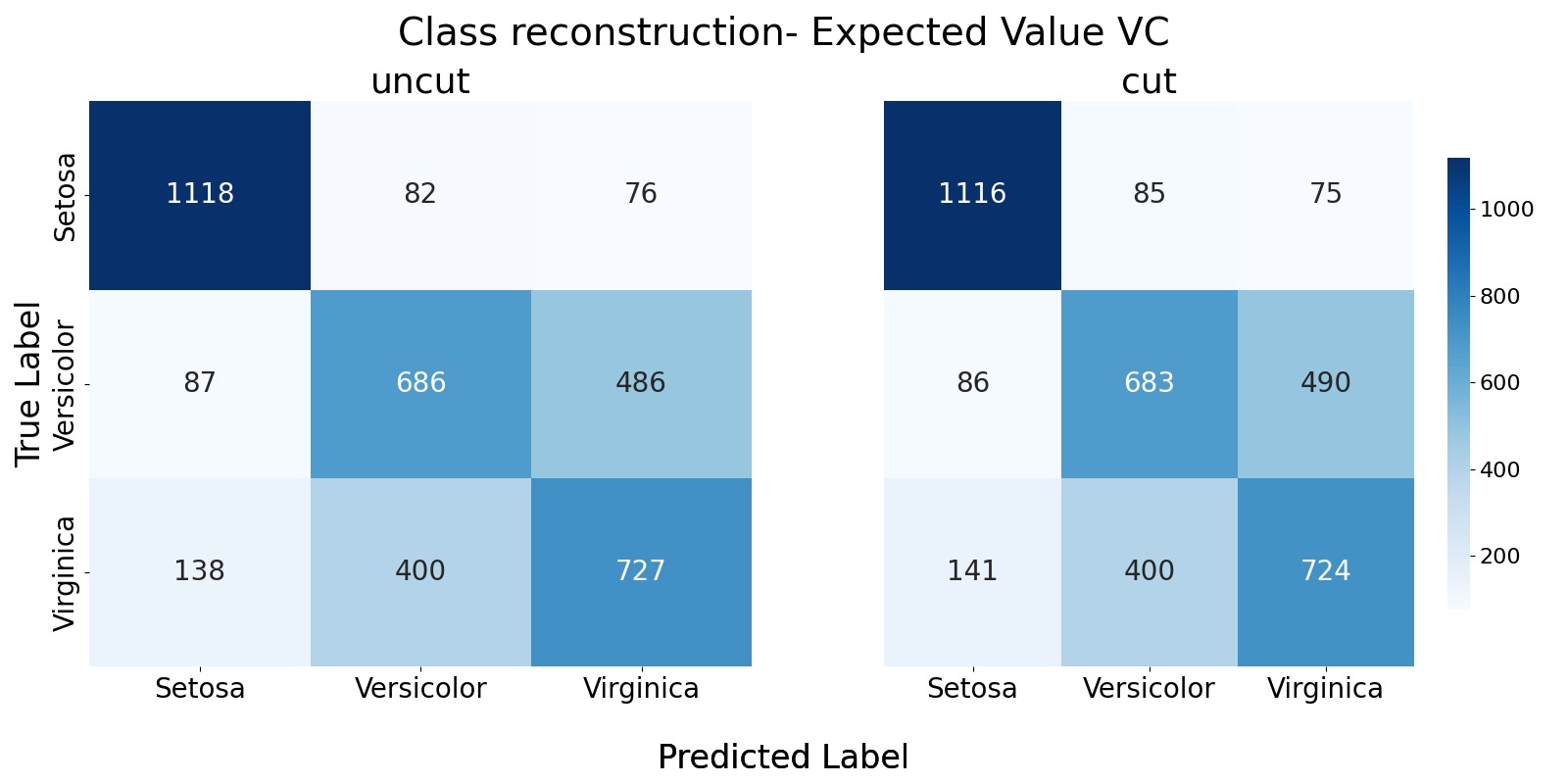}
    \caption{Confusion matrices comparing the classifications produced by the cut and uncut version, using 100 validation sets of 38 elements}
    \label{fig:class-reconstruction-EVVC_all}
\end{figure}
We see that performance of the classifier is barely affected by the usage of circuit cutting.
The \emph{Setosa} samples are often classified correctly, whereas for the \emph{Versicolor} and \emph{Virginica} samples, errors are made. 
Still, the results between the cut and uncut versions align, again showing the correctness of the circuit cutting methods. 

\subsubsection{Cut then fit} \label{sec:expected_value_cut_and_fit}
We now consider the results for the same \texttt{expected\_value\_model} classifier, but now using the \texttt{cut\_then\_fit} strategy.
Note that, due to its high computational cost, the results presented in this section were obtained by training the QML models for only one epoch (5 iterations), starting from previously optimized model parameters derived from standard \texttt{PennyLane-based} \texttt{VC} runs.
This initial phase consisted of training the uncut models for 10 epochs with a batch size of 25 for a total of 50 iterations. 
The final parameters were then used to initialize the \texttt{cut\_then\_fit} experiments.
Note further that, unlike \texttt{fit\_then\_cut}, the additional computational overhead of \texttt{cut\_then\_fit} also prevented us from performing multiple trainings with different random seeds. Therefore, all experiments were conducted using a fixed random seed of 42 to ensure reproducibility. 
\cref{fig:cut_and_fit_expected_value} shows the results for the \texttt{expected\_value\_model} classifier in terms of the confusion matrix.
We also included the uncut model trained for 11 epochs, assuring the final results are obtained using the same number of training iterations. 
Interestingly, the cut version seems outperform the uncut version for this specific instance. 
We expect that results from statistical inaccuracies. 

The confusion matrix shows that the training loss history of the \texttt{cut\_then\_fit} \texttt{DQ VC} model is largely consistent with that of the original uncut models. 
The classification scores are also comparable: the circuit-cut \texttt{DQ VC} model achieved 78.9\% (\cref{tab:model_scores}), while the uncut models reached 73.7\% (after 11 epochs) and 81.6\% (after 10 epochs), respectively. 
Overall, as with \texttt{fit\_then\_cut}, the approximations introduced by the \texttt{fit\_then\_cut} strategy and their impact on the classification results are minimal.

A possible explanation can be found in the \textit{warm start} using already trained parameters. 
Still, we continued the training for five more iterations. 
The loss for \texttt{fit\_then\_cut} follows roughly the same pattern as the loss for the uncut variational classifier. 
\begin{figure}
  \centering
  \begin{subfigure}[b]{\linewidth}
    \centering
    \includegraphics[width=0.99\linewidth]{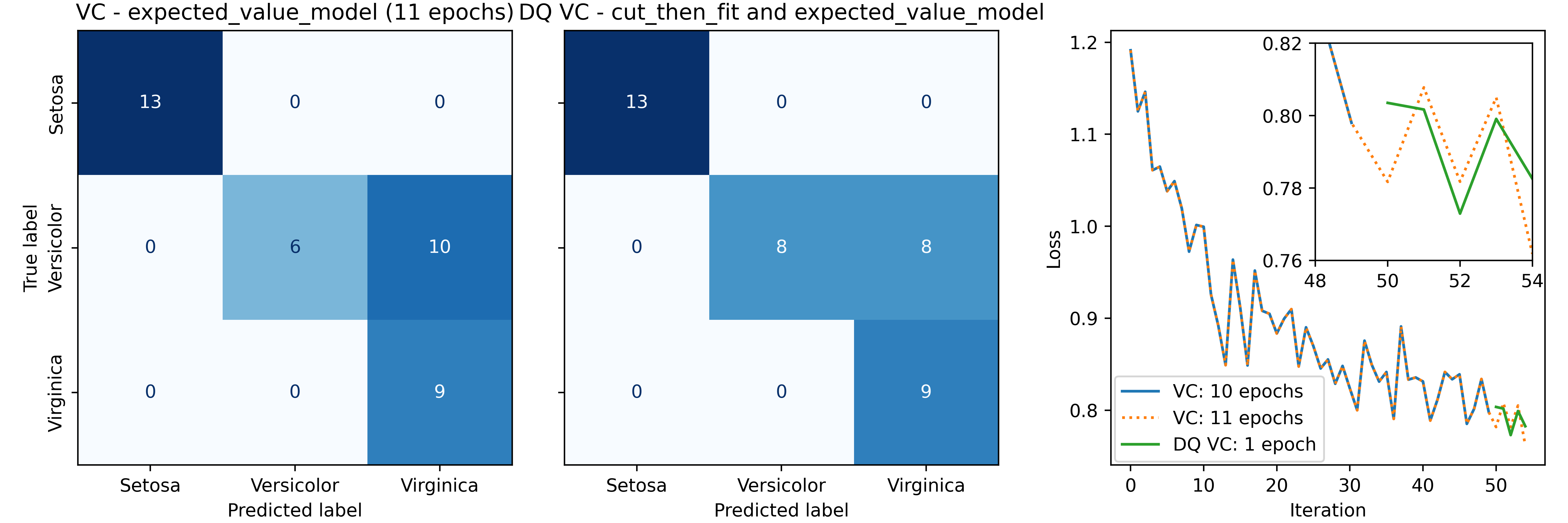}
    \caption{}
    \label{fig:cut_and_fit_expected_value}
  \end{subfigure}
  \vskip\baselineskip
  \begin{subfigure}[b]{\linewidth}
    \centering
    \includegraphics[width=0.99\linewidth]{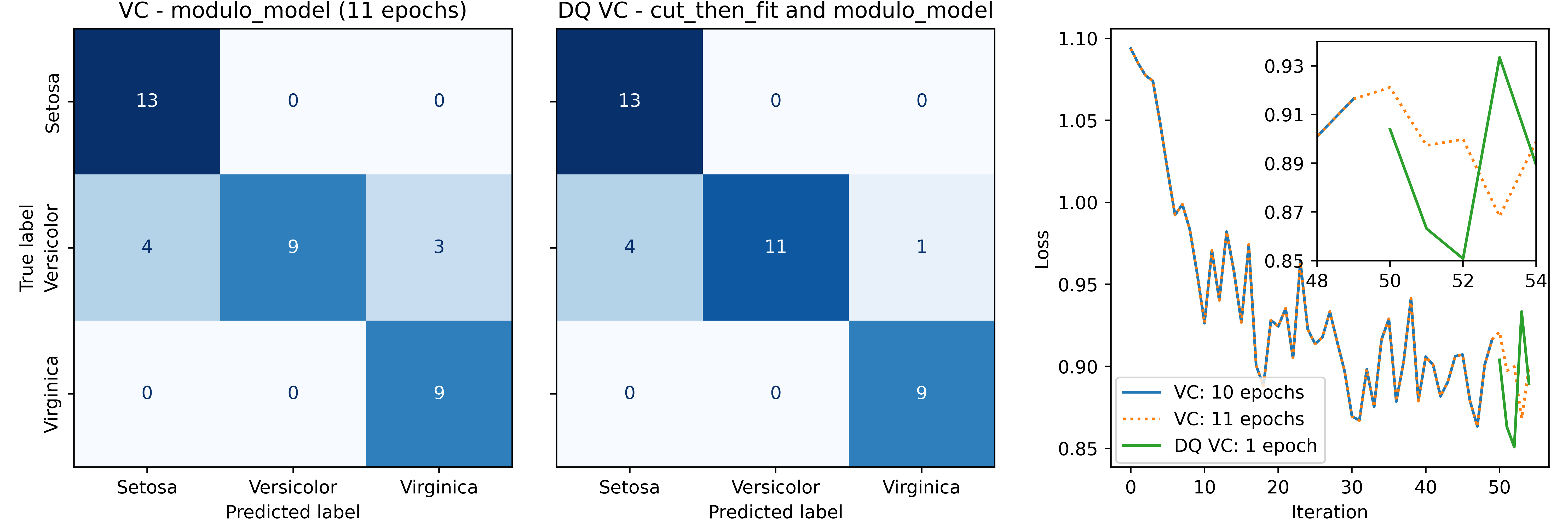}
    \caption{}
    \label{fig:cut_and_fit_modulo_model}
  \end{subfigure}
  \vskip\baselineskip
  \begin{subfigure}[b]{\linewidth}
    \centering
    \includegraphics[width=0.99\linewidth]{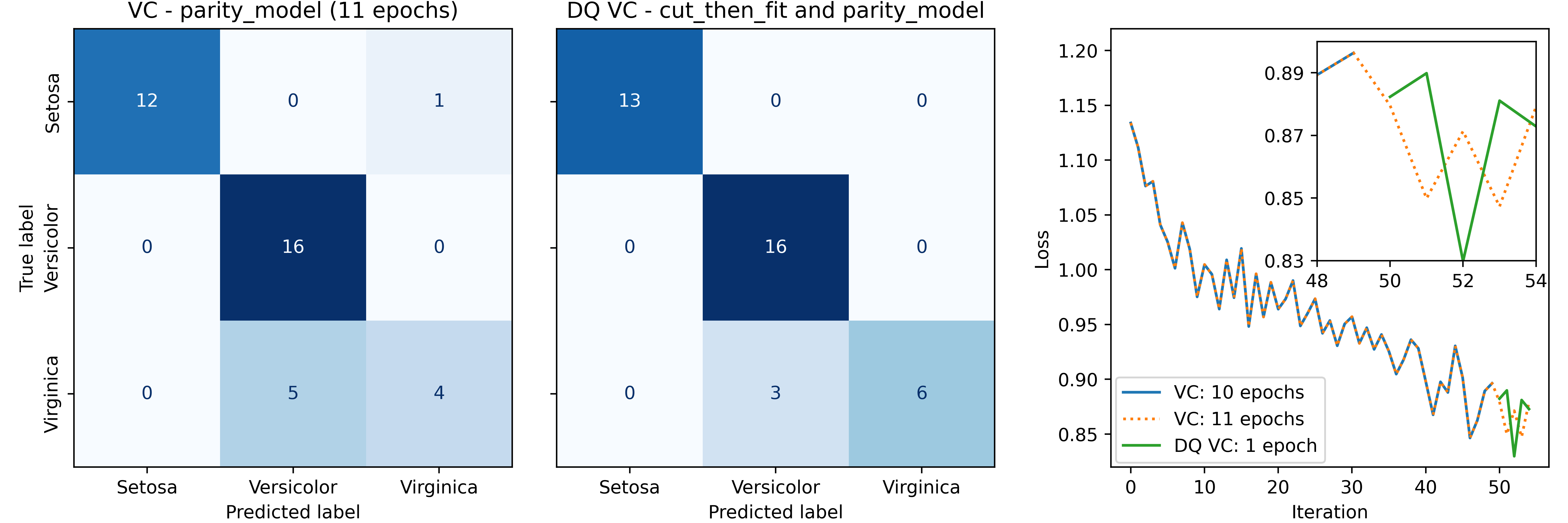}
    \caption{}
    \label{fig:cut_and_fit_parity_model}
  \end{subfigure}
  \caption{Confusion matrices and training loss histories of the various QML models trained with the \texttt{cut\_then\_fit} strategy:
  (a) \texttt{expected\_value\_model}; (b) \texttt{modulo\_model}; (c) \texttt{parity\_model}. 
  \texttt{VC} refers to the original \texttt{PennyLane}-based \texttt{VC} runs without circuit cuts, while \texttt{DQ VC} denotes the results obtained using the modified \texttt{VC} package developed in this work with circuit cutting. 
  \texttt{DQ VC} models are trained for 5 iterations only (1 epoch and batch size 25) using as starting parameters those previously optimized with the original uncut \texttt{VC} for 10 epochs. 
  For ease of comparison, the iteration indices (x-axis) of the \texttt{DQ VC} models are shifted by 50 in the loss history plots.}
  \label{fig:cut_and_fit_performance}
\end{figure}

\subsection{VC with the \texttt{modulo\_model}}
Here we show the results obtained for the classifier using the \texttt{modulo\_model}. 
Again, for our results, we have fitted the classifier using the \texttt{fit\_then\_cut} and the \texttt{cut\_then\_fit} strategy.

\subsubsection{Fit then cut} \label{sec:modulo_fit_and_cut}
Here we show the performance of the \texttt{fit\_then\_cut} implementation using circuits generated by the \texttt{modulo\_model} classifier. 
\cref{fig:probability-reconstruction-VC-modulo} summarizes the result for the reconstructed probability distribution with and without circuit cutting for the circuit shown in \cref{fig:vc_circuit_cut}.
First, for a single instance, the reconstructed probability distribution is compared with the original one, and the accumulated error over all basis states is shown. Additionally, we show the accumulated error averaged over all 38 instances. 
We see that despite the small deviations, the general form of the probability distribution is still obtained, the main goal of this circuit cutting technique. 
\begin{figure}
    \centering
    \begin{subfigure}[b]{\linewidth}
        \centering
        \includegraphics[width=\linewidth]{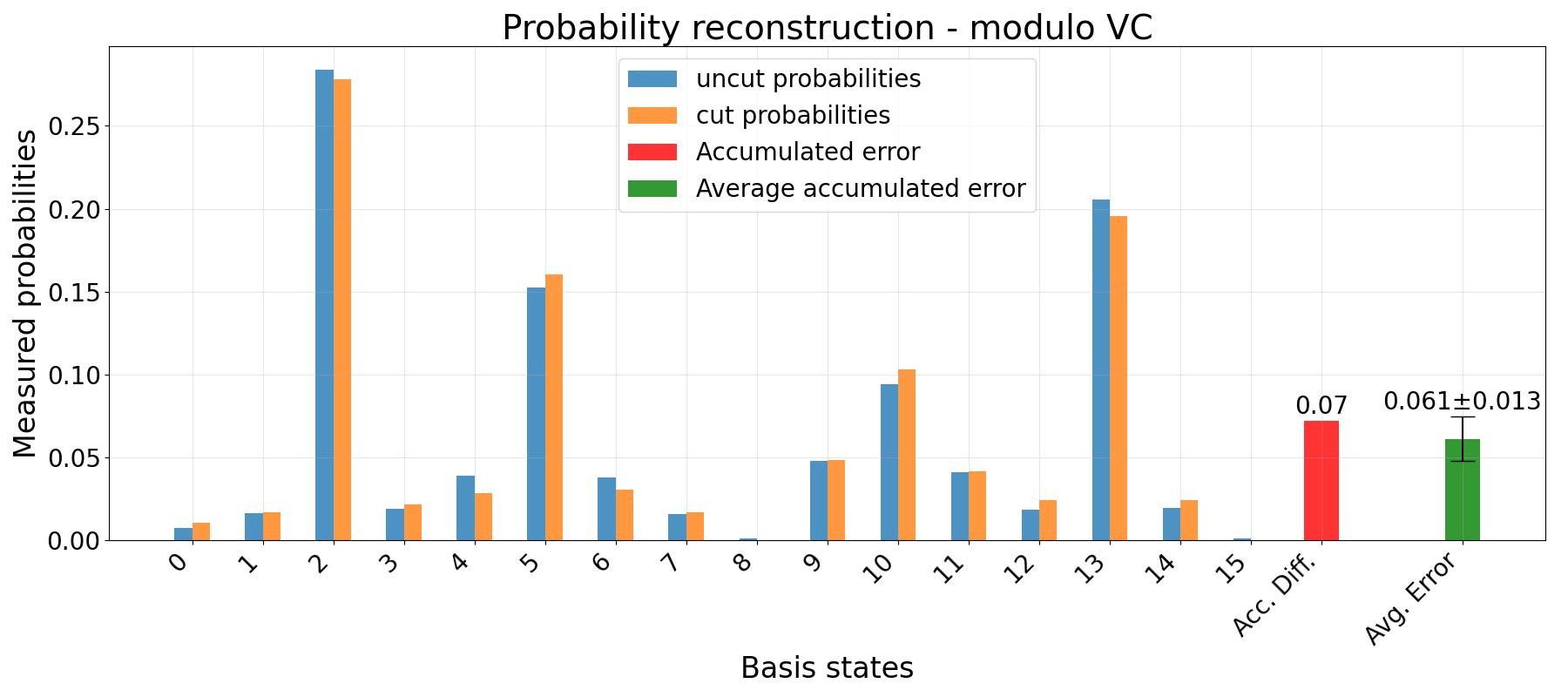}
        \caption{Probability reconstruction for the modulo classifier.}
        \label{fig:probability-reconstruction-VC-modulo}
    \end{subfigure}
    \begin{subfigure}[b]{\linewidth}
        \centering
        \includegraphics[width=\linewidth]{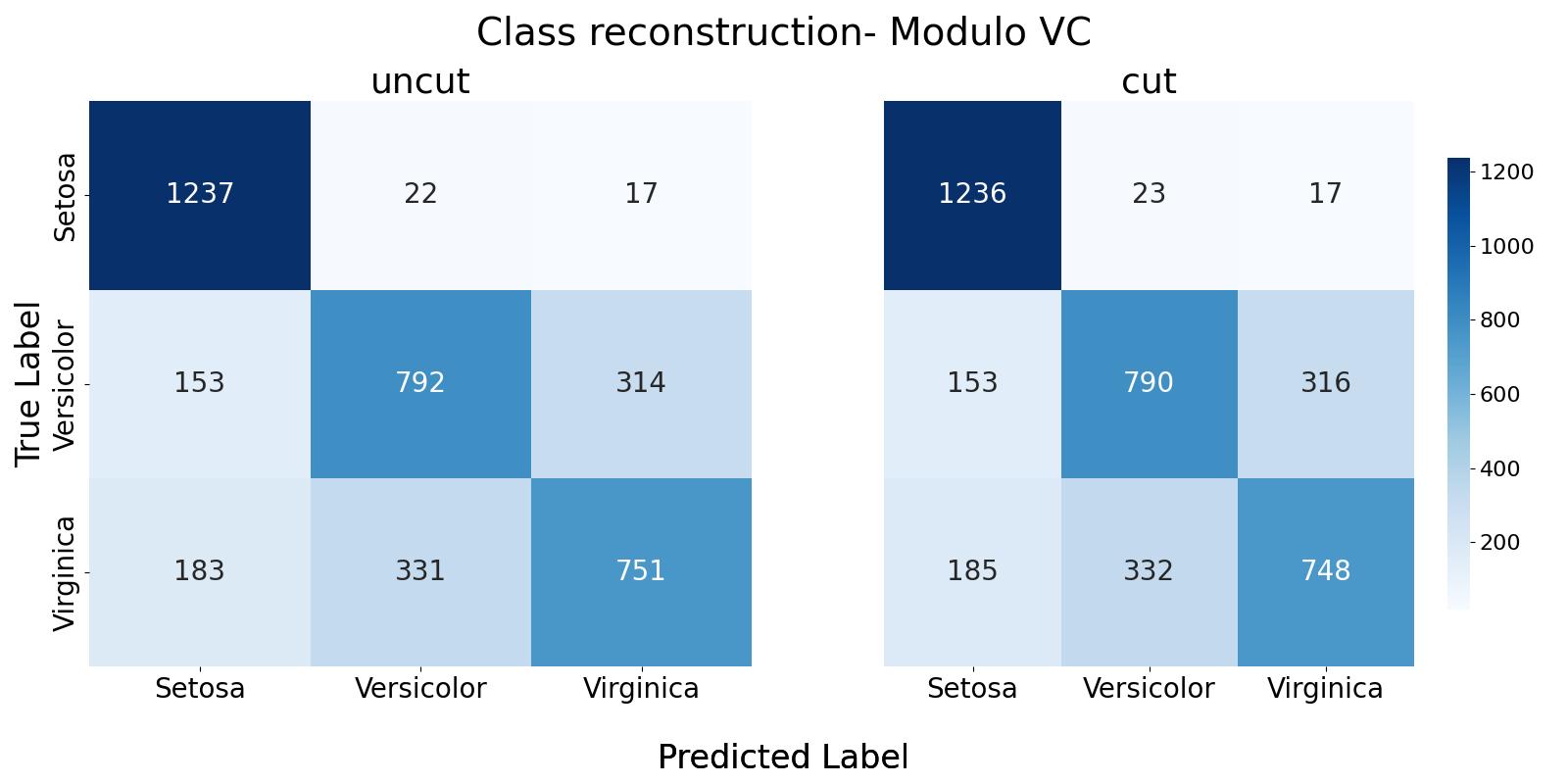}
        \caption{Aggregated results for 100 validations sets of 38 elements}
    \label{fig:class-reconstruction-MVC-all}
    \end{subfigure}
    \caption{The results for the \texttt{modulo\_model}.
    The deviation in the reconstructed probability distribution is given, as well as the confusion matrix for 100 different train-test instances. 
    For both figures, the uncut and cut versions are compared, where the cut version uses the \texttt{modulo\_model} and the \texttt{fit\_then\_cut} strategy. }
    \label{fig:probability-reconstruction}
\end{figure}

It is interesting to see that the average accumulated error for the probability distribution reconstruction is similar to the average accumulated error for the expectation value, as you would suspect in that model that statistical variances will play a smaller role. 
Additionally, the accumulated error is now taken over 16 basis states, rather than over 4 expectation values. 
\cref{fig:class-reconstruction-MVC-all} shows the aggregated confusion matrix for 100 different train-test sets.
We again observe a strong overlap between the uncut and cut versions. 

\subsubsection{Cut then fit} \label{sec:modulo_cut_and_fit}
\cref{fig:cut_and_fit_modulo_model} shows the results of the QML fits done with the \texttt{modulo\_model} classifier in combination with the \texttt{cut\_then\_fit} strategy. 
Note that the experiments were conducted in a similar manner as for the \texttt{expectation\_value\_model}. 
\cref{tab:model_scores} summarizes the final predicted classification scores. 

Despite the more noticeable differences in the training loss histories of the \texttt{DQ VC} and uncut \texttt{VC} models, the predicted classification scores remain quite consistent -- similar to results obtained with \texttt{fit\_then\_cut}.
Specifically, the cut \texttt{DQ VC} model achieves a score of 86.8\%, whereas the corresponding value for the uncut model is 81.6\% after 11 epochs.
Excellent agreement is also found when comparing the corresponding confusion matrices, further attesting to the robustness of our circuit cutting strategy. 
Indeed, \cref{fig:cut_and_fit_modulo_model} reveals that the \texttt{modulo\_model} performs significantly better in classifying \emph{Versicolor} instances compared to \texttt{expected\_value\_model}, cf. \cref{fig:cut_and_fit_expected_value}.

\subsection{VC with the \texttt{parity\_model}}
Here we show the results obtained for the classifier using the \texttt{parity\_model}. 
Again, for our results, we have trained the classifier using the \texttt{fit\_then\_cut} and the \texttt{cut\_then\_fit} strategy.

\subsubsection{Fit then cut} \label{sec:parity_fit_and_cut}
Here we show the performance of the \texttt{fit\_then\_cut} implementation using circuits generated by the \texttt{parity\_model} classifier. 
\cref{fig:probability-reconstruction-VC-parity} summarizes the result for the reconstructed probability distribution with and without circuit cutting for the circuit shown in \cref{fig:vc_circuit_cut}.
The shown probability distribution and accumulated error again is for a single instance, whereas the average accumulated error is taken over all 38 instances. 
Again, the general form of the probability distribution is correctly found. 
\begin{figure}
    \centering
    \begin{subfigure}[b]{1\linewidth}
        \centering
        \includegraphics[width=\linewidth]{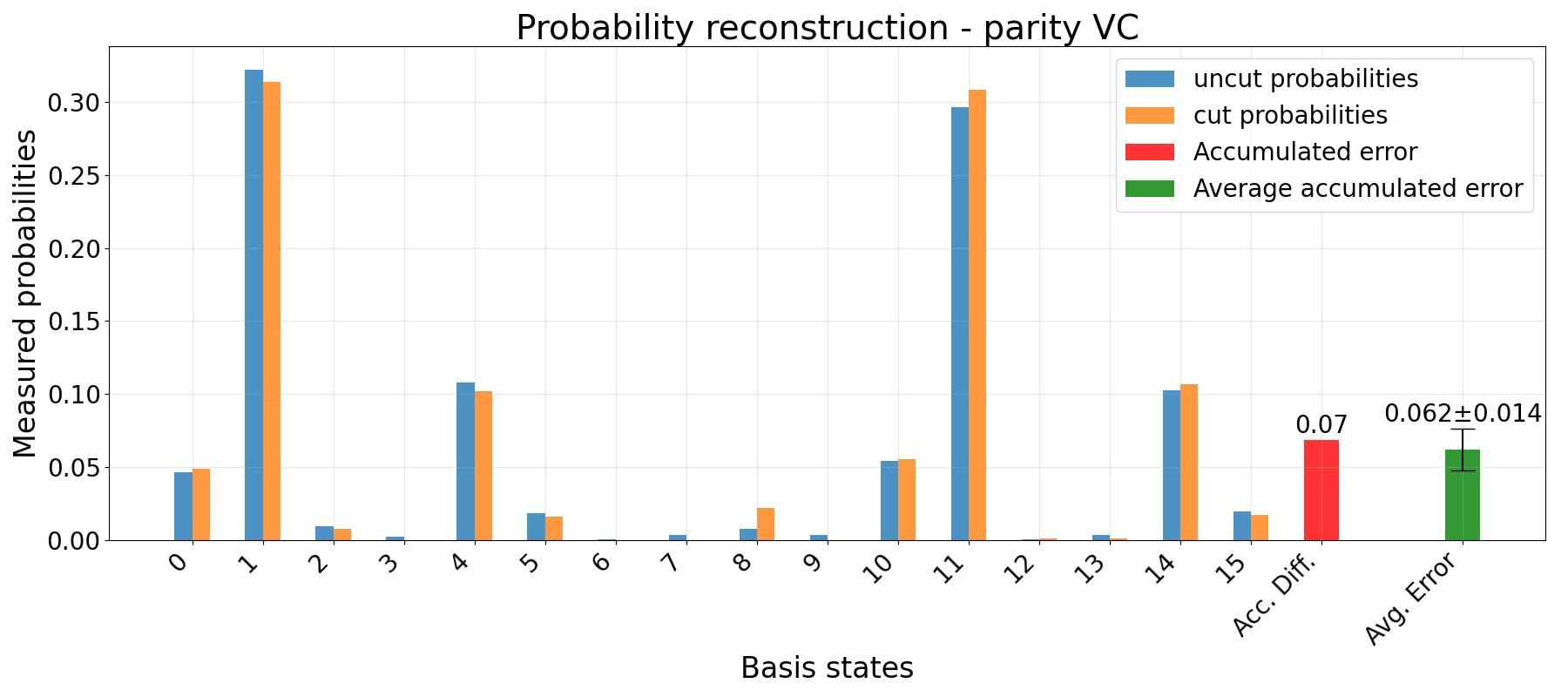}
        \caption{Probability reconstruction for parity classifier.}
    \label{fig:probability-reconstruction-VC-parity}
    \end{subfigure}
    \begin{subfigure}[b]{\linewidth}
        \centering
        \includegraphics[width=\linewidth]{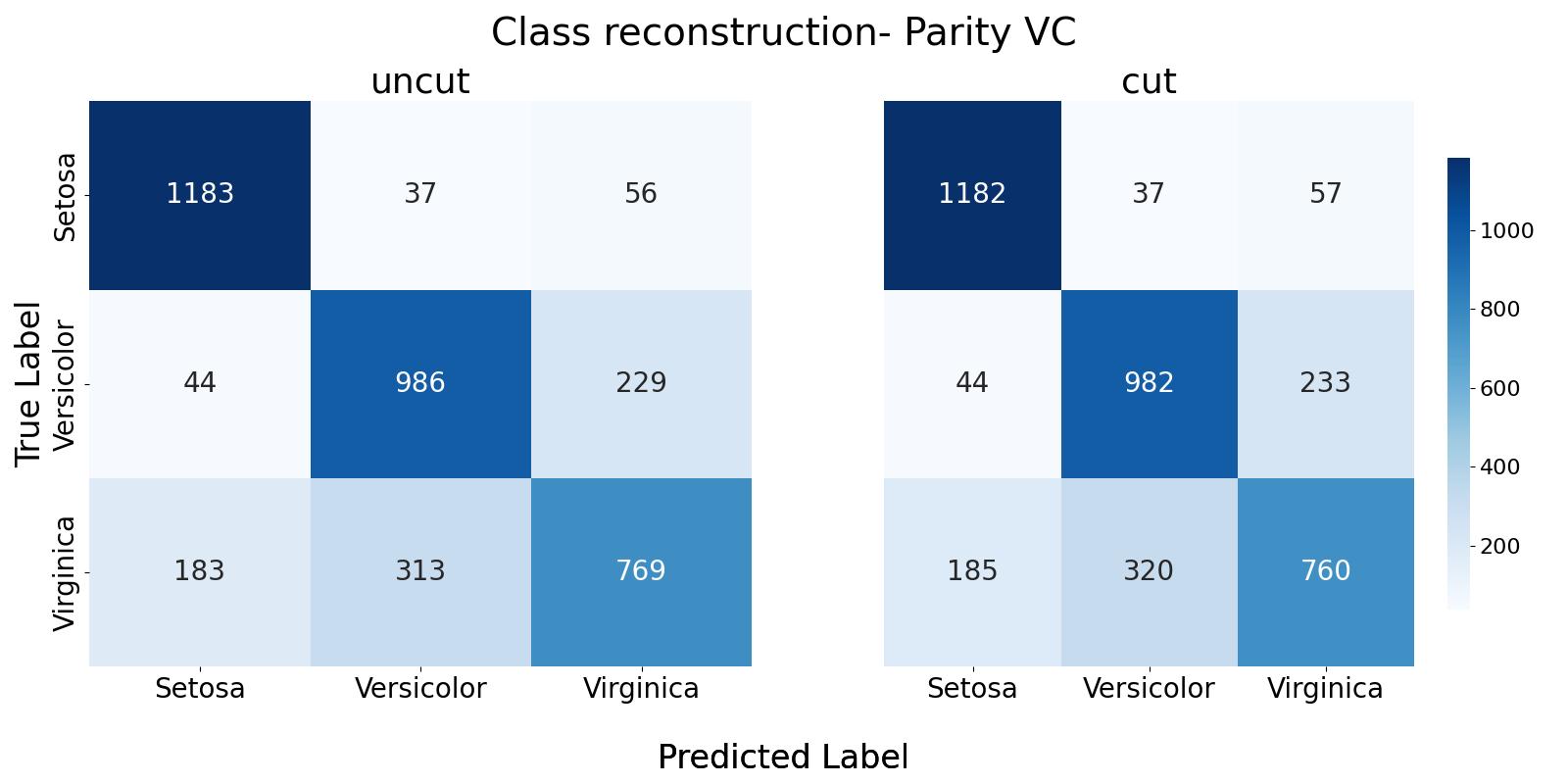}
        \caption{Aggregated results for 100 validations sets of 38 elements}
    \label{fig:class-reconstruction-PVC}
    \end{subfigure}
    \caption{Confusion matrices comparing the classifications produced by the cut and uncut version considering the parity VC.}
    \label{fig:class-reconstruction-PVC_all}
\end{figure}

\cref{fig:class-reconstruction-PVC_all} shows the aggregated confusion matrix for 100 different train-test sets.
We see that the results match those of the uncut circuit.
More interestingly, we observe the effect of a different mapping from measurement outcome to label: 
The \texttt{parity\_model} performs better on the two `difficult' classes, \emph{Versicolor} and \emph{Virginica}, than the \texttt{modulo\_model} and \texttt{expectation\_value\_model}.

\subsubsection{Cut then fit} \label{sec:parity_cut_and_fit}
\cref{fig:cut_and_fit_parity_model} shows the results for the \texttt{parity\_model} classifier employing the \texttt{cut\_then\_fit} strategy.
Once again, the scores calculated from 38 test samples show excellent agreement with those of the corresponding uncut model: score of 92.1\% with circuit cutting, compared to 84.2\% (after 11 epochs) and 86.8\% (after 10 epochs) for the uncut \texttt{VC} model.
Note that the model with this strategy has difficulty with correctly labeling \emph{Virginica} instances.
This fact is, however, also visible in the uncut model. 

The table below summarizes all obtained results for all discussed experiments. 
\begin{table}[htb]
\centering
\begin{threeparttable}
\caption{Summary of the final predicted scores on the Iris dataset for
all QML models trained in this work.}
\label{tab:model_scores}
\begin{tabular}{
    >{\centering\arraybackslash}m{3.5cm} 
    >{\centering\arraybackslash}m{2.0cm} 
    >{\centering\arraybackslash}m{2.0cm} 
    >{\centering\arraybackslash}m{2.0cm}
}
\toprule
\multirow{4}{*}{QML Model} & \multicolumn{3}{c}{Score (\%)} \\
\cmidrule(lr){2-4}
 & \multirow{2}{*}{\texttt{VC}\tnote{a}} & \multicolumn{2}{c}{\texttt{DQ VC}\tnote{a}} \\
 \cmidrule(lr){3-4}
 & & \texttt{fit\_then\_cut}\tnote{b} & \texttt{cut\_then\_fit}\tnote{d} \\
\midrule
\texttt{expected\_value\_model} & 66.6\tnote{b}\quad 73.7\tnote{c} & 66.4 & 78.9 \\
\texttt{modulo\_model} & 73.2\tnote{b}\quad 81.6\tnote{c} & 73.0 & 86.8 \\
\texttt{parity\_model} & 77.3\tnote{b}\quad 84.2\tnote{c} & 76.9 & 92.1 \\
\bottomrule
\end{tabular}
\begin{tablenotes}
\item[a]{\footnotesize \texttt{VC} = Original \texttt{VC} package with no circuit cut, \texttt{DQ VC} = modified \texttt{VC} package developed in this work to support circuit cutting.}
\item[b]{\footnotesize Values averaged over 100 tests with different splits in training and validation sets. Models trained for 120 iterations (1 epoch, batch size 2).}
\item[c]{\footnotesize Evaluated on 38 test samples. \texttt{VC} models trained for 55 iterations (11 epochs and batch size 25) with fixed random seed of 42.}
\item[d]{\footnotesize As in c. \texttt{DQ VC} models trained for 5 iterations (1 epoch and batch size 25) using previously optimized \texttt{VC} model parameters; see text.}
\end{tablenotes}
\end{threeparttable}
\end{table}

\subsection{Hardware and noisy simulation validation}
The final experiment focused on circuit cutting in presence of noise.
For this, we used both noisy simulations and hardware implementations. 
As we had only limited quantum hardware resources available, we only implemented the trained models for the \texttt{parity\_model} as this model showed the highest accuracy among the two models. 
We used the parameters from the noiseless training phase for the \texttt{fit\_then\_cut} training strategy. 
The noise model used for the simulations was derived from the noise parameters of the \texttt{ibm\_strasbourg} device, the device on which we implemented the quantum circuits. 

Using the exact simulation as our base truth, we define the error as the sum over all basis states of the absolute difference between the probability assigned to that basis state and the probability assigned to that basis state using exact simulation. 
\cref{fig:uncut_hardware_noisy} shows the results for the uncut quantum circuit. 
We averaged over 100 independent runs for the noisy simulation and the hardware implementation. 
The average error over these 100 runs is 0.35 and 0.614 for the noisy simulations and hardware experiments, respectively, with standard deviations of 0.014 and 0.168, respectively.

\cref{fig:cut_hardware_noisy} shows the results for the cut quantum circuits. 
The results for the noisy simulation are averaged over 50 independent runs, and due to resource constraints, only a single run using quantum hardware was performed. 
The error in the hardware results is 0.207, and the average error over 50 runs in the noisy simulation results is 0.143 with a standard deviation of 0.013.
These results show lower error rates for both noisy simulation and hardware implementation for the cut circuit than for the uncut circuit. 
\begin{figure}
    \centering
    \includegraphics[width=.8\linewidth]{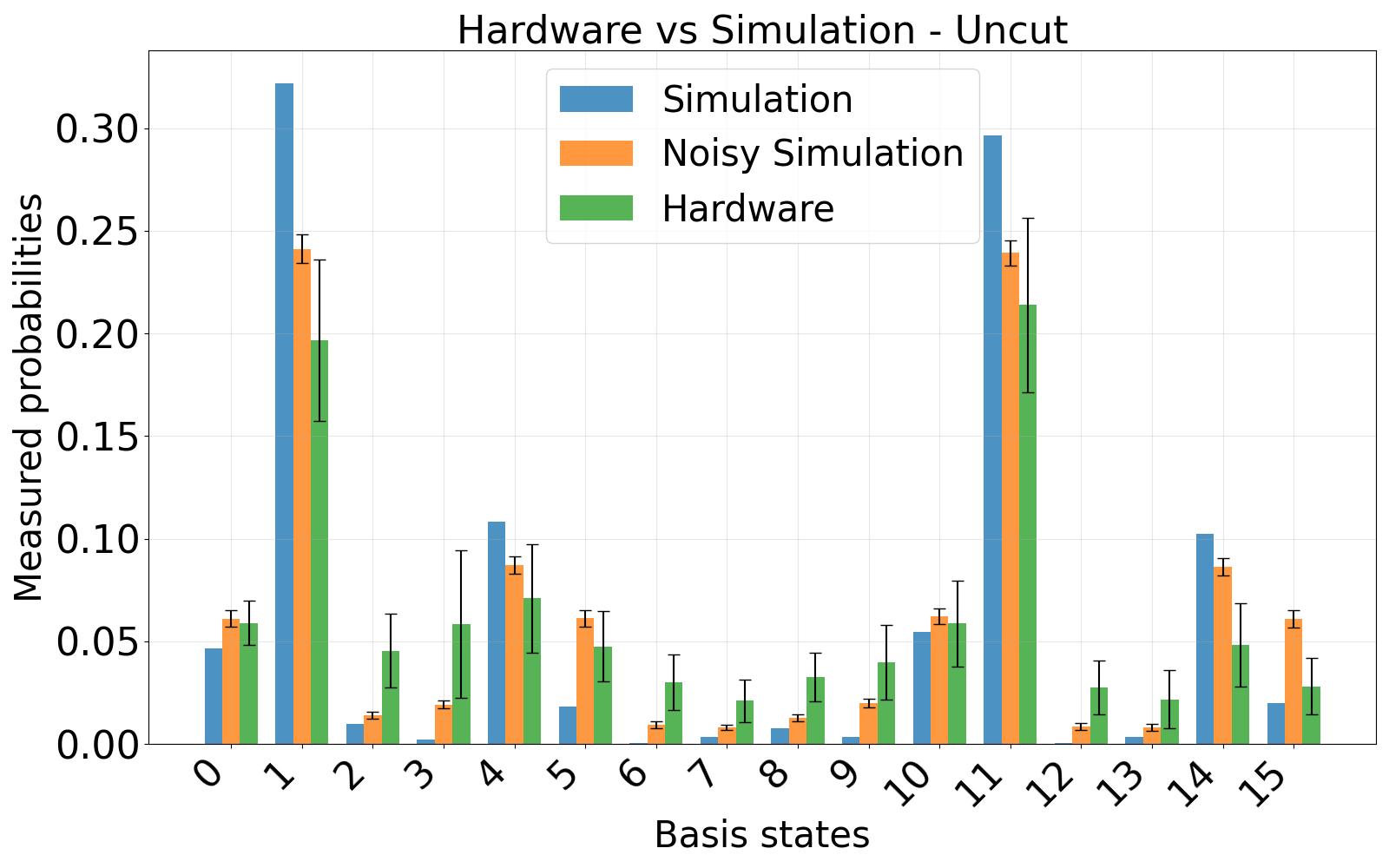}
    \caption{A comparison between probability distributions obtained from a single four-qubit circuit produced by the \texttt{parity\_model}, using exact simulation, and 100 repetitions using noisy simulation and hardware. The error bars show the standard deviation of the probability assigned to basis states within the 100 runs. The hardware and noise model used are from the IBM QPU \texttt{ibm\_strasbourg}.}
    \label{fig:uncut_hardware_noisy}
\end{figure}
\begin{figure}
    \centering
    \includegraphics[width=.8\linewidth]{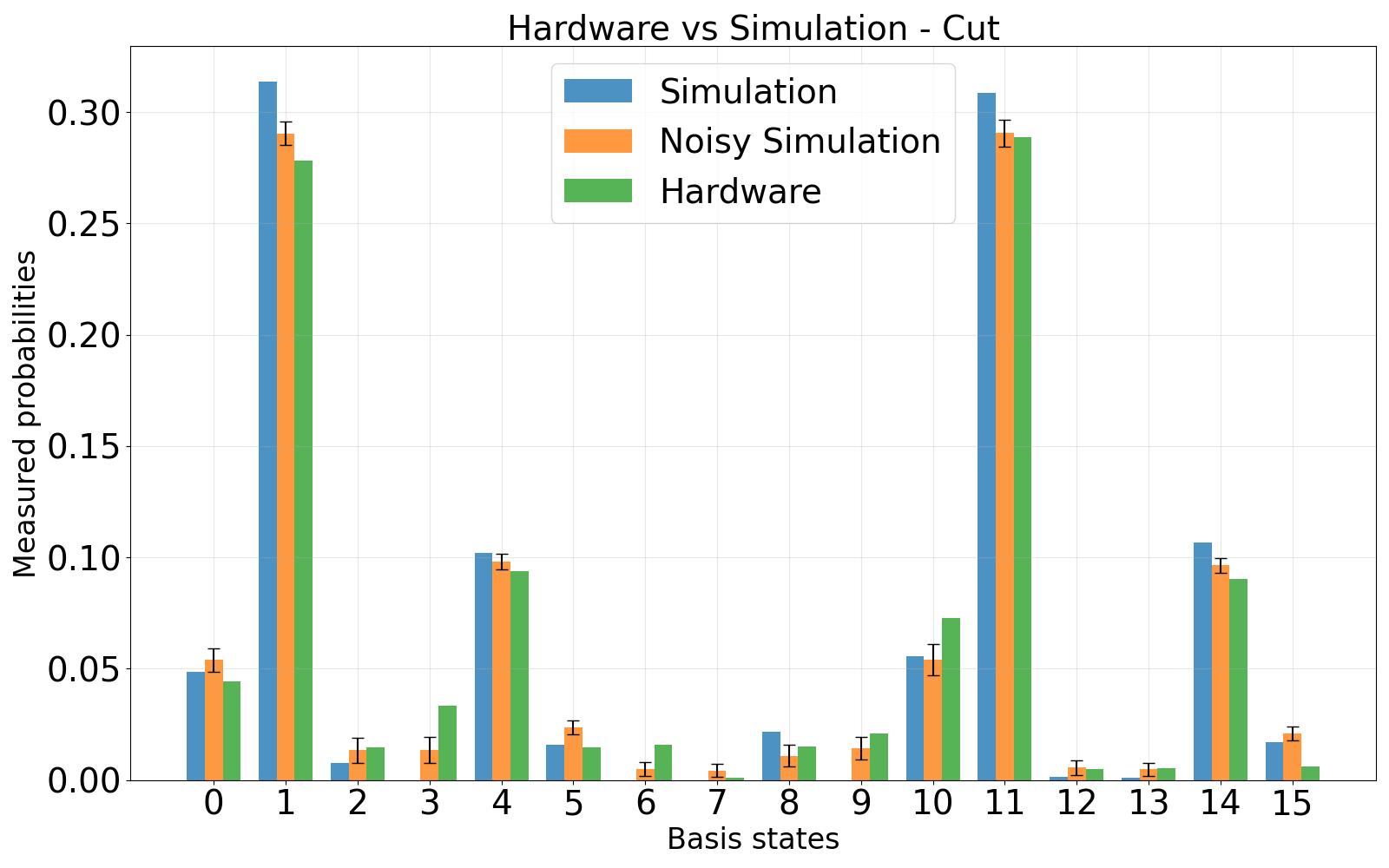}
    \caption{A comparison between probability distributions reconstructed after cutting a single circuit produced by the \texttt{parity\_model}, using exact simulation, hardware and 50 repetitions using noisy simulation. The error bars show the standard deviation of the probability assigned to basis states within the 50 runs. The hardware and noise model used are from the IBM QPU \texttt{ibm\_strasbourg}.}
    \label{fig:cut_hardware_noisy}
\end{figure}

\section{Discussion}\label{sec:Discussion}
This work introduced a new framework for conducting circuit cutting in quantum circuits. 
We empirically validated the correctness and then showed its potential using different use cases. 
Additionally, we introduced a new method to reconstruct the probability distribution of a quantum circuit, a notoriously hard task in quantum computing in general and circuit cutting specifically. 

Circuit cutting techniques can help to significantly reduce circuit depth and circuit size, making them (more) suitable to run on quantum hardware. 
These methods do however come with classical overhead:
A larger quantum circuit is decomposed in multiple smaller quantum circuits, for instance by replacing multi-qubit gates by single-qubit gates and measurements. 
Each of these smaller quantum circuits can then independently be run on quantum hardware, and the results are recombined using a classical computer. 
Noisy hardware can especially benefit from this technique, as it can lower the quantum resource requirements. 

Based on the validations and benchmarks performed using both random circuits and variational classifier models, we demonstrate that the proposed probability distribution reconstruction approach is highly robust and performs similar to expectation-value reconstruction-based methods. 
Despite the higher number of possible outcomes, the method still obtained a similar average accumulated error as the expectation value outcomes. 
We note that this method is useful for many practical quantum algorithms, where a probability distribution is prepared with a few states with high amplitude. 
In those cases, the error margin of the method suffices to still obtain correct results. 
For algorithms that have a approximately flat final probability distribution, this method will not work, as in those cases the individual probabilities are exponentially small. 

Next, we considered two training strategies for quantum machine learning applications, \texttt{fit\_then\_cut} and \texttt{cut\_then\_fit}. 
The former is easiest, as circuit cutting is only required once, to run the circuit with the optimal parameters. 
The second strategy is most realistic, also during training, limited resources are available. 
Due to the high computational cost of the latter method, we initialized the training with already trained results from an uncut model, and then trained for 5 more iterations in a cut model.
Future work can explore the effect of doing the entire training in a cut-like manner. 

Simulating a quantum circuit is hard, especially as the circuit grow in size. 
Additionally, the circuit cutting techniques give multiple quantum circuits that have to be run independently on quantum hardware. 
Finally, training a quantum circuit requires multiple evaluations to update the model parameters. 
These three effects combined make that simulating circuit cutting techniques for variational classifiers is hard, especially when employing the \texttt{cut\_then\_fit} strategy. 
The simulation time for a single experiment could be up to one week, forcing us to use only a limited number of experiments for some results. 
Still, we observe that the \texttt{cut\_then\_fit} strategy works and produces similar results as the uncut model and when using the \texttt{fit\_then\_cut} strategy.

As an added benefit, this work empirically shows the potential of different encoding methods. 
The \texttt{parity\_model}, despite its more complex mapping between quantum states and class labels, shows higher classification scores than the other mappings. 

All experiments showed incorrect assignments after training, especially with the \emph{Versicolor} and \emph{Virginica} classes. 
However, these errors appear in both the cut and the uncut experiments, indicating a cause independent of the applied circuit cutting methods. 
Most likely, the errors originate from the limited flexibility of the overall QML model architecture.
Standard methods, such as kernels~\cite{Boser:1992}, might help achieve better accuracy, posing an interesting direction for future research. 

Looking at the results using hardware and noisy simulations, we see that circuit cutting mitigates the impact of noise significantly. 
In addition, for the circuit cutting setting, the results for the hardware implementation are well approximated by both the noisy simulation and even the noiseless simulation, opposed to the uncut circuit. 
Note that interestingly, cut quantum circuits perform better in this case than uncut quantum circuits, overcoming the additional stochasticity introduced by the circuit cutting methods. 
This observation highlights the potential of circuit cutting for large industry use cases.  

Due to limited quantum resources, the quantum hardware results cover only a small part of the experiments carried out for the noiseless simulation. 
Extra quantum resources would allow for a thorough statistical analysis of the significance of the results. 
Additionally, full access to the quantum hardware, in contrast to the black-box approach employed in this work, allows for fine-tuning of the implementations to obtain better results. 

Future work can furthermore focus on larger quantum circuits and algorithms, and on other kinds of quantum algorithms, both variational and algorithms that do not require training. 
Other directions could focus on the quantum variational classifier and test other mixing layer layouts or other training methods.

\section*{Acknowledgments}
This publication is part of the project \textit{Divide and Quantum} `D\&Q' NWA.1389.20.241 of the program `NWA-ORC', which is partly funded by the Dutch Research Council (NWO). 

\printbibliography

\end{document}